\def\bhBox#1#2#3{\parbox[c][#1][c]{#2}{
    \makebox[#2]{#3}}}
\def\sItem#1{
  \psframebox[framearc=0.7,fillstyle=solid]{\bhBox{10pt}{35pt}{$#1$}}
}
\newtheorem{theorem}{Theorem}
\newtheorem{lemma}{Lemma}
\newtheorem{corolary}{Corolary}
\author{Lu\'is M. S. Russo\\
  \texttt{luis.russo@tecnico.ulisboa.pt}\\
  INESC-ID and Department of Computer Science and Engineering,\\
  Instituto Superior T\'{e}cnico, Universidade de Lisboa.}
\title{Range Minimum Queries in Minimal Space}
\date{}
\begin{document}

\maketitle

\begin{abstract}
  We consider the problem of computing a sequence of range minimum
  queries. We assume a sequence of commands that contains values and
  queries. Our goal is to quickly determine the minimum value that exists
  between the current position and a previous position $i$. Range minimum
  queries are used as a sub-routine of several algorithms, namely related
  to string processing. We propose a data structure that can process these
  commands sequences. We obtain efficient results for several variations of
  the problem, in particular we obtain $O(1)$ time per command for the
  offline version and $O(\alpha(n))$ amortized time for the online version,
  where $\alpha(n)$ is the inverse Ackermann function and $n$ the number of
  values in the sequence. This data structure also has very small space
  requirements, namely $O(\ell)$ where $\ell$ is the maximum number active
  $i$ positions. We implemented our data structure and show that it is
  competitive against existing alternatives. We obtain comparable command
  processing time, in the nano second range, and much smaller space
  requirements.
\end{abstract}
{\bf Keywords:} Range Minimum Queries, Union Find, Disjoint Sets,  Bulk
Queries, String Processing, Longest Common Extension.

\section{The Problem}
\label{sec:problem}

Given a sequence of integers, usually stored in an array $A$, a range
minimum query (RMQ) is a pair of indexes $(i,j)$. We assume that
$i \leq j$. The solution to the query consists finding in the minimum value
that occurs in $A$ between the indexes $i$ and $j$. Formaly the solution is
$\min \{A[k] | i \leq k \leq j\}$. There exist several efficient solutions
for this problem, in this static offline context, see
Section~\ref{sec:related-work}. In this paper we consider the case where
$A$ is not necessarially stored. Instead we assume that the elements of $A$
are streamed in a sequential fashion. Likewise we assume that the
corresponding queries and are intermixed with the values of $A$ and the
answers to the operations are computed online. Hence we assume that the
input to our algorithm consists in a sequence of the following commands:
\begin{description}
\item[Value] - represented by \texttt{V}, is followed by an integer, or
  float, value $v$ and it indicates that $v$ is the next entry of $A$,
  i.e., $A[j] = v$.
\item[Query] - represented by \texttt{Q}, is followed by an integer that
  indicates a previous index of the sequence. The given integer corresponds
  to the element $i$ in the query. The element $j$ is the position of the
  last given value of $A$. Hence it is only necessary to specify $i$. This
  command can only be issued if an \texttt{M} command was given at position
  $i$ and no close command was given with argument $i$.
\item[Mark] - represented by \texttt{M}, indicates that future queries may
  use the current position $j$ as element $i$, i.e., as the beginning of
  the query.
\item[Close] - represented by \texttt{C}, is also followed by an integer
  $i$ that represents an index of the sequence. This command essentially
  nullifies the effect of an \texttt{M} command issued at position $i$.
  Hence the command indicates that the input contains no more queries that
  use $i$. Any information that is being kept about position $i$ can be
  purged.
\end{description}
For simplicity we assume that the sequence of commands is not designed to
hack our data structure. Hence we assume that no patological sequences are
given as input. Examples of patological sequences would be: issuing the
\texttt{Mark} command twice or more or mixed with \texttt{Query}; issuing a
\texttt{Close} command for an index that was not marked; issuing
\texttt{Mark} commands for positions that have been closed; etc.

Consider the following example sequence. We will use this sequence
throughout the paper.

\begin{verbatim}
V 22 M V 23 M V 26 M V 28 M V 32 M V 27 M V 35 M Q 4 C 3
\end{verbatim}

In this paper we study this type of sequences. Our contributions are the
following:
\begin{itemize}
\item We propose a new algorithm that can efficiently process this type of
  input sequences. We show that our algorithm produces the correct
  solution.
\item We analyze the algorithm and show that it obtains a fast running time
  and requires only a very small amount of space. Specifically the space
  requirements are shown to be at most $O(q)$, where $q$ is the number of
  queries. Recall that we do not store the array $A$. We further reduce
  this bound to $O(\ell)$. Consider at some instant the number of marked
  positions that have not yet been closed. We refer to these positions as
  active. The maximum number of active positions over all instants is
  $\ell$. The query time is shown to be $O(1)$ in the offline version of
  the problem and $O(\alpha(\ell))$ on the online version, where $\alpha$
  is the inverse Ackermann function, see Theorem~\ref{teo:basic} and
  Corolary~\ref{cor:const} in Section~\ref{sec:analysis}. We also discuss
  the use of this data structure for real-time applications. We obtain a
  high probability $O(\log n)$ time for all operations,
  Theorem~\ref{teo:high}. We also discuss trade-off that can reduce this
  bound to $O(\log \log n)$ for some operations, Theorem~\ref{teo:real}.
\item We implemented the online version of our algorithm and show
  experimentally that it is very efficient both in time and space.
\end{itemize}
\section{Data Structure Outline}
\label{sec:solution}
Let us now dicuss how to solve this problem, by gradually considering the
challenge at hand. We start by describing a simple structure. We then
proceed to improve its performance, first by selecting fast data structures
which provide good time bounds and second by reducing the space
requirements from $O(q)$ to $O(\ell)$.

Consider again the sequence in Section~\ref{sec:problem}. Our first data
structure is a stack, which we use in the same way as for building a
Cartesian tree, see~\citet*{Crochemore_2020}. The process is simple. We
start by pushing a $-\infty$ value into the stack, this value will be used
as a sentinel. To start the discussion we will assume, for now, that every
\texttt{Value} command is followed by a \texttt{Mark} command, meaning that
every position is relevant for future queries.

An important invariant of this stack is that the values form an increasing
sequence. Whenever a value is received it is compared with the top of the
stack. While the value at hand is smaller the stack gets poped. At some
point the input value will be larger than the top of the stack, even if it
is necessary for the sentinel to reach the top. When the input value is
larger than the top value it gets pushed into the stack. Another important
property of this data structure is that the values in the stack are the
only possible solutions for range minimum queries $(i,j)$, where $j$ is the
current position of the sequence being processed and $i$ is some previous
position.

To identify the corresponding $i$ it is usefull to keep, associated to each
stack item, the set of positions that yield the corresponding item as the
RMQ solution. Maintaining this set of positions is fairly simple. Whenever
an item is inserted into the stack it is inserted with the current
position. We number positions by starting at $1$. When an item is poped
from the stack the set of positions associated to that item is transferred
into the set of positions of the item below it. In our example the
\texttt{Value} $27$ command puts the positions $4$ and $5$ into the same
set. The rightmost gray rectangle in Figure~\ref{fig:stack1} illustrates
the state of this data structure after processing the commands \texttt{V 35
  M} of our sample sequence. To process a \texttt{Close} command we remove
the corresponding position from whatever set it belongs to, i.e., command
\texttt{C} followed by $i$ removes $i$ from a position set.

Figure~\ref{fig:stack1} illustrates the configuration of this data
structure as it processes the following sequence of commands:
\begin{verbatim}
V 22 M V 23 M V 26 M V 28 M V 32 M V 27 M V 35 M Q 4 C 3
\end{verbatim}
Each gray rectangle shows a different configuration. The leftmost
configuration is obtained after the \texttt{V 32 M} commands. The second
configuration after the \texttt{V 35 M} commands. The rightmost
configuration is the final one after the \texttt{C 3}. The solution to the
\texttt{Q 4} command is $27$, because it is the stack item associated with
the position $4$ in the rightmost configuration, these values are
highlighted in bold.

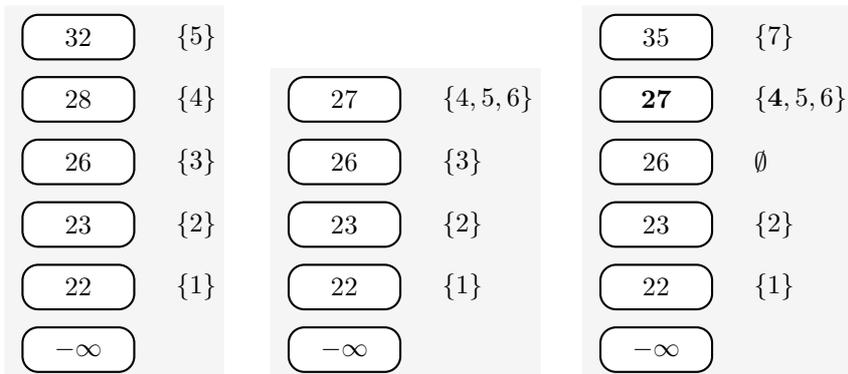
\begin{figure}[tb]
  \centering
  \begin{pspicture}(11.5,5)
    \rput[bl](0,0){
      \psframebox*[fillstyle=solid,fillcolor=black!4]{
        \begin{psmatrix}[mnode=r,rowsep=5pt,colsep=12pt,mcol=l]
          \sItem{32} & $\{5\}$ \\
          \sItem{28} & $\{4\}$ \\
          \sItem{26} & $\{3\}$ \\
          \sItem{23} & $\{2\}$ \\
          \sItem{22} & $\{1\}$ \\
          \sItem{-\infty}
        \end{psmatrix}}}
    \rput[bl](3.5,0){
      \psframebox*[fillstyle=solid,fillcolor=black!4]{
        \begin{psmatrix}[mnode=r,rowsep=5pt,colsep=12pt,mcol=l]
          \sItem{27} & $\{4,5,6\}$ \\
          \sItem{26} & $\{3\}$ \\
          \sItem{23} & $\{2\}$ \\
          \sItem{22} & $\{1\}$ \\
          \sItem{-\infty}
        \end{psmatrix}}}
    \rput[bl](7.6,0){
      \psframebox*[fillstyle=solid,fillcolor=black!4]{
        \begin{psmatrix}[mnode=r,rowsep=5pt,colsep=12pt,mcol=l]
          \sItem{35} & $\{7\}$ \\
          \sItem{\mathbf{27}} & $\{\mathbf{4},5,6\}$ \\
          \sItem{26} & $\emptyset$ \\
          \sItem{23} & $\{2\}$ \\
          \sItem{22} & $\{1\}$ \\
          \sItem{-\infty}
        \end{psmatrix}}}
  \end{pspicture}
  \caption[Structure illustration]{Illustration of structure configuration
    at different instances.  Each gray rectangle shows the stack on the
    left and the corresponding sets of positions on the right. }
  \label{fig:stack1}
\end{figure}

Using a standard stack implementation it is possible to guarantee $O(1)$
time for the push and pop operations. Hence, ignoring the time required to
process the sets of positions, the pairs of \texttt{Value} and
\texttt{Mark} operations require only constant amortized time to
compute. In the worst case a \texttt{Value} operation may need to discard a
big stack, i.e., it may require poping $O(n)$ items, where $n$ is the total
amount of positions in $A$. However since each operation executes at most
one push operation the amortized time becomes $O(1)$. Hence the main
challenge for this data structure is how to represent the sets of
positions. To answer this question we must first consider how to compute
the \texttt{Query} operation. Given this command, followed by a value $i$,
we proceed to find the set that contains $i$ and report the corresponding
stack element. For example to process the \texttt{Q} $4$ command in the
input sequence we most locate the set that contains position $4$. In this
case the set is $\{4, 5, 6\}$ and the corresponding element is $27$. Hence
the essential operations that are required for the sets of positions are
the union and the find operations. Union is used when merging sets in the
\texttt{Mark} operation and find is used to identify sets in the
\texttt{Query} operation.

A naive implementation requires $O(n)$ time for each operation. Instead we
use a dedicated data structure that supports both operations in
$O(\alpha(n))$ amortized time, where $\alpha(n)$ is the inverse Ackermann
function. Note that although conceptually the \texttt{Close} command
removes elements from the position sets this data structure is essentially
ignoring these operations. They do not alter the Union-Find (UF) data
structure. Hence, once an element is assigned to a set, it can no longer be
removed. Fortunately the resulting procedure is still sound, albeit it
requires more space. This version does require a large amount of space,
specifically $O(n)$ space.

Let us now focus on reducing the space to $O(m)$, where $m$ is the total
number of \texttt{Mark} commands, which should be equal to the total number
of \texttt{Close} commands. We must also have that $m \leq q$, where $q$ is
the number of \texttt{Query} commands, as there is no point in issuing
redundant \texttt{Mark} commands. Note that $m$ may be much smaller than
$n$ as there might be many more \texttt{Value} commands than \texttt{Mark}
commands.

To guarantee that the size of the stack is at most $O(m)$ we now consider
the situation where not all the \texttt{Value} commands are followed by
\texttt{Mark} commands, otherwise $n$ and $m$ would be similar. In this
case only the marked positions need to be stored in the stack, thus
reducing its size. This separation of commands means that our operating
procedure also gets divided. The \texttt{Mark} command only pushes elements
into the stack. The \texttt{Value} commands only performs the poping
commands. Hence in this scenario both the \texttt{Mark} and \texttt{Value}
commands require $O(\alpha(n))$ amortized time.

To illustrate the division we have just described consider the following
sequence of commands:
\begin{verbatim}
V 22 M V 23 V 26 M V 28 M V 32 M V 27 M V 35 M Q 4 C 3
\end{verbatim}
We illustrate the state of the resulting data structure in
Figure~\ref{fig:stack2}. Notice that in this sequence there is no
\texttt{M} command after \texttt{V 23}. Therefore this value never gets
inserted into the stack.
\begin{figure}[tb]
  \centering
  \begin{pspicture}(11.5,4.3)
    \rput[bl](0,0){
      \psframebox*[fillstyle=solid,fillcolor=black!4]{
        \begin{psmatrix}[mnode=r,rowsep=5pt,colsep=12pt,mcol=l]
          \sItem{32} & $\{5\}$ \\
          \sItem{28} & $\{4\}$ \\
          \sItem{26} & $\{3\}$ \\
          \sItem{22} & $\{1\}$ \\
          \sItem{-\infty}
        \end{psmatrix}}}
    \rput[bl](3.5,0){
      \psframebox*[fillstyle=solid,fillcolor=black!4]{
        \begin{psmatrix}[mnode=r,rowsep=5pt,colsep=12pt,mcol=l]
          \sItem{27} & $\{4,5,6\}$ \\
          \sItem{26} & $\{3\}$ \\
          \sItem{22} & $\{1\}$ \\
          \sItem{-\infty}
        \end{psmatrix}}}
    \rput[bl](7.6,0){
      \psframebox*[fillstyle=solid,fillcolor=black!4]{
        \begin{psmatrix}[mnode=r,rowsep=5pt,colsep=12pt,mcol=l]
          \sItem{35} & $\{7\}$ \\
          \sItem{27} & $\{4,5,6\}$ \\
          \sItem{26} & $\emptyset$ \\
          \sItem{22} & $\{1\}$ \\
          \sItem{-\infty}
        \end{psmatrix}}}
  \end{pspicture}
  \caption[Omitted \texttt{Mark} illustration]{Illustration of structure
    configuration at different instances. In this sequence of commands
    there is no \texttt{M} command after \texttt{V 23}. Each gray rectangle
    shows the stack on the left and the corresponding sets of positions on
    the right.}
  \label{fig:stack2}
\end{figure}
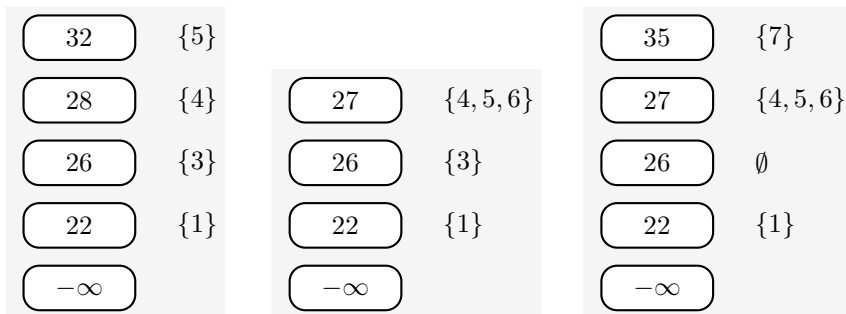

To reduce the size of the UF data structure we add a hash table to
it. Without this table every one of the $n$ position values are elements
for the UF data structure. Using a hash we can filter out only the marked
positions. When a \texttt{Mark} command is issued we insert the current $j$
position as the hash key and the value is the current number of UF
elements. This reduces the size of the UF data structure to $O(m)$.
Moreover the hash table also requires only $O(m)$ space. Hence this data
structure requires only $O(m)$ space and can process any sequence of
commands in at most $O(\alpha(n))$ amortized time per command.  When a
\texttt{Close} $i$ command is issued we mark the position $i$ as deleted in
the hash table, but we do not actually remove it from memory. The reason
for this process is that a stack item might actually point to position $i$
and removing it would break the data structure. For the $O(m)$ space bound
this is not an issue as inactive markings count for the overall total.

In the next section we discuss several nuances of this data structure,
including how to further reduce the space requirements to $O(\ell)$ space
and alternative implementations.

\section{The Details}
\label{sec:details}
In this Section we will prove that the algorithm is correct and analyze its
performance. We start of by giving a pseudo code description of the
algorithms used for each command,
Algorithms~\ref{alg:query},~\ref{alg:close},~\ref{alg:value}
and~\ref{alg:Mark}. In these algorithms we make some simplifying
assumptions and use some extra commands that we will now define.

For simplicity we describe the data structure that does not use a
hash-table. We use $S$ to represent the stack data structure, but we also
use $S[k']$ to reference the element at position $k'$. In general the top
of the stack is at position $k$, which also corresponds to the number of
elements in the stack. We use $k$ as a global variable. We also use $k$ as
a bounded variable in the Lemma statements. Hence the value of $k$ must be
derived from context. This is usually not a problem and in fact it is handy
for the proofs, which most of the time only need to consider when $k$ is
the top of the stack. We also use the notation \texttt{Top}$(S)$ to refer
to the top of the stack, this value is equal to $S[k]$. Note that this
means that the element $S[k-1]$ is the one just below the \texttt{Top}
element. Algorithms~\ref{alg:push} and~\ref{alg:pop} used to manipulate the
stack status and are given for completion. The set of positions associated
with each stack item are denoted with the letter $P$.  In our example we
have that $P[4] = \{4, 5, 6\}$, see Figure~\ref{fig:stack1}.

In algorithm~\ref{alg:query} we assume that the result of the \texttt{Find}
command is directly a position index of $S$, therefore the expression
$S[\texttt{Find}(i)]$ for Algorithm~\ref{alg:query}. The \texttt{NOP}
command does nothing, it is used to highlight that without a hash table
there is nothing for the \texttt{Close} command to execute.

The \texttt{Make-Set} function is used to create a set in the UF data
structure, the first argument indicates the element that is stored in the
set (position $j$) and the second argument the level of the last element on
the stack $S$, i.e., $k$. It is the values given in this second argument
that we expect \texttt{Find} to return. Likewise the \texttt{Union}
function receives three arguments. The sets that we want to unite and again
the top of the stack $k$. Note that in Algorithm~\ref{alg:Mark} we use
$\{j\}$ as one of the arguments to \texttt{Union} operation. In this case
we are assuming that this operation makes the corresponding
\texttt{Make-Set} operation.

Besides $k$ we have a few global variables, $j$ which indicates the current
position in $A$ and $v$, which is not an argument of the \texttt{Mark}
command but is used in that command. At that point it is assumed that $v$
is the last value given in the \texttt{Value} command.

\begin{algorithm}
  \caption{} \label{alg:push}
  \begin{algorithmic}[1]
    \Procedure{Push}{$v$}\Comment{Insert element in stack}
    \State $k\gets k+1$
    \State $S[k] = v$
    \EndProcedure
  \end{algorithmic}
\end{algorithm}
\begin{algorithm}
  \caption{} \label{alg:pop}
  \begin{algorithmic}[1]
    \Procedure{Pop}{}\Comment{Remove element from stack}
    \State $k\gets k-1$
    \EndProcedure
  \end{algorithmic}
\end{algorithm}
\begin{algorithm}
  \caption{} \label{alg:query}
  \begin{algorithmic}[1]
    \Procedure{Query}{$i$}\Comment{Return RMQ $(i,j)$}
    \State \Return $S[\texttt{Find}(i)]$
    \EndProcedure
  \end{algorithmic}
\end{algorithm}
\begin{algorithm}
  \caption{} \label{alg:close}
  \begin{algorithmic}[1]
    \Procedure{Close}{$i$}\Comment{Ignore command}
    \State \texttt{NOP}
    \EndProcedure
  \end{algorithmic}
\end{algorithm}
\begin{algorithm}
  \caption{} \label{alg:value}
  \begin{algorithmic}[1]
    \Procedure{Value}{$v$}\Comment{Put into the stack}
    \If {$S[k]> v$} \Comment{Test element at the
      \texttt{Top}.} \label{line:wif}
    \While {$S[k-1] \geq v$} \Comment{Test element below the
      \texttt{Top}.} \label{line:wguard}
    \State \texttt{Union}(\texttt{P}$[k]$, \texttt{P}$[k-1]$, $k$) \label{line:union}
    \Comment{Unite top position sets.}
    \State \texttt{Pop}() \label{line:pop}
    \EndWhile
    \State $S[k]= v$ \label{line:setTop}
    \EndIf
    \State $j\gets j+1$
    \EndProcedure
  \end{algorithmic}
\end{algorithm}
\begin{algorithm}
  \caption{} \label{alg:Mark}
  \begin{algorithmic}[1]
    \Procedure{Mark}{}\Comment{Put into the stack}
    \If {$S[k]< v$} \label{line:mif}
    \State \texttt{Push}($v$) \Comment{Insert $v$ into $S$.} \label{line:push}
    \State \texttt{Make-Set}$(j, k)$ \Comment{Associate with $k$.}
    \Else
    \State \texttt{Union}(\texttt{P}$[k]$, $\{j\}$, $k$) \Comment{Assume it
      calls \texttt{Make-Set}.}
    \EndIf
    \EndProcedure
  \end{algorithmic}
\end{algorithm}
\subsection{Correctness}
\label{sec:correctness}
In this Section we establish that our algorithm is correct, meaning the
values obtained from our data structure actually correspond to the
solutions of the given range minimum queries. We state several invariant
properties that the structure always maintains.

We consider the version of the data structure that consists of a stack and
a UF structure. The version containing a hash is relevant for obtaining an
efficient structure but does not alter the underlying operation
logic. Hence the correctness of the algorithm is preserved, only its
description is more elaborate.

We prove the invariant properties by structural induction, meaning that we
assume that they are true before a command is processed and only need to
prove that the property is maintained by the corresponding processing. For
this kind of argument to hold it is necessary to verify that the given
properties are also true when the structure is initialized, this is in
general trivially true so we omit this verification from the following
proofs. Another declutering observation is that the \texttt{Query} and
\texttt{Close} commands do not alter our data structure and therefore are
also omitted from the following proofs.

Let us start by establishing some simple properties.
\begin{lemma}
  The stack $S$ always contains at least two elements.
  \label{lemma:stwo}
\end{lemma}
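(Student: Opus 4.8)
The plan is to prove this by structural induction on the sequence of commands, following the proof methodology established in the text. The invariant to maintain is that $S$ always contains at least two elements. The base case is the initialization: the text states we begin by pushing the $-\infty$ sentinel, and the first relevant command will push at least one more element, so I need to verify that after initialization (and the first push) the stack has at least two elements. Following the declutering convention stated in the excerpt, the \texttt{Query} and \texttt{Close} commands do not alter the stack and can be omitted, so only \texttt{Value} (Algorithm~\ref{alg:value}) and \texttt{Mark} (Algorithm~\ref{alg:Mark}) need to be analyzed.

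For the inductive step I would examine each of the two modifying commands separately. The \texttt{Mark} command only ever pushes onto the stack or performs a union (line~\ref{line:push} or the \texttt{Union} branch); in neither case does it pop, so it cannot decrease the stack size and the invariant is trivially preserved. The delicate case is \texttt{Value}, which is the only command that pops. The key observation is that the \texttt{while} loop on line~\ref{line:wguard} is guarded by the condition $S[k-1] \geq v$, which inspects the element \emph{below} the top. I would argue that this guard can never cause the stack to shrink below two elements: when $k = 2$ the element $S[k-1] = S[1]$ is the sentinel $-\infty$, and since every genuine input value $v$ satisfies $v > -\infty$, the test $S[k-1] \geq v$ fails precisely when we are about to pop the element sitting just above the sentinel. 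Hence the loop halts while at least the sentinel and the current top remain, guaranteeing $k \geq 2$ after the loop.

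The main obstacle I anticipate is making the sentinel argument fully rigorous, since it requires a companion fact that the bottom of the stack is always the sentinel $-\infty$ and that $-\infty$ is strictly smaller than every value $v$ that can appear as an argument to the loop test. I would therefore state as a subsidiary claim (or invoke it as an auxiliary invariant maintained by the same induction) that $S[1] = -\infty$ throughout, which follows because the sentinel is pushed first and is never popped: no value $v$ satisfies $-\infty \geq v$, so the guard on line~\ref{line:wguard} stops the loop before reaching position~$1$. Once this auxiliary fact is in hand, the conclusion that $k \geq 2$ is immediate, and combined with the \texttt{Mark} analysis the invariant is preserved by every command, completing the induction.

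The overall structure would thus be: (i) establish the auxiliary invariant $S[1] = -\infty$; (ii) use it to show the \texttt{Value} command's while loop terminates with $k \geq 2$; (iii) observe \texttt{Mark} never pops; and (iv) conclude by induction. I expect steps (i) and (ii) to carry essentially all the content, with the remaining cases being routine.
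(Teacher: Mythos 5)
Your inductive step is essentially identical to the paper's: \texttt{Mark} never pops, and for \texttt{Value} the \textbf{while} guard on line~\ref{line:wguard} cannot hold when $k=2$ because $S[1]=-\infty\geq v$ is never true, so a \texttt{Pop} is never executed on a two-element stack. Making the auxiliary invariant $S[1]=-\infty$ explicit is a mild improvement in rigor; the paper uses that fact implicitly when it writes the $k=2$ guard as ``$-\infty = S[1] > v$''.

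The gap is in your base case. The paper's proof reveals that, for the detailed algorithms, the stack is initialized with \emph{two} sentinels, $-\infty$ followed by $+\infty$, so the invariant holds at time zero and the induction starts cleanly. You instead assume the single-sentinel initialization from the outline in Section~\ref{sec:solution} and argue that ``the first relevant command will push at least one more element.'' That claim fails: in Algorithm~\ref{alg:value} the \texttt{Value} command never pushes (it only pops and overwrites the top via line~\ref{line:setTop}); only \texttt{Mark} pushes (line~\ref{line:push}). Since a command sequence typically begins with one or more \texttt{Value} commands (as in the paper's running example), under your initialization the stack would contain exactly one element for an arbitrarily long prefix of the execution, and the lemma as stated --- ``always contains at least two elements'' --- would simply be false during that prefix. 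What you end up proving is the weaker statement that the stack has at least two elements from the first \texttt{Mark} onward. The missing idea is the second sentinel $+\infty$: it is what makes the invariant hold unconditionally, and incidentally it is also what lets the first \texttt{Value} command take the branch at line~\ref{line:wif} and install its value by overwriting $S[2]$ rather than by pushing.
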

\begin{proof}
  In this particular proof it is relevant to mention the initial state of
  the stack $S$. The stack is initialized with two sentinel values,
  $-\infty$ followed by $+\infty$. Hence it initially contains at least two
  elements.
  \begin{itemize}
  \item The \texttt{Mark} command. This command does not uses the
    \texttt{Pop} operation and therefore never reduces the number of
    elements. The result follows by induction hypothesis.

  \item The \texttt{Value} command. For the \texttt{Pop} operation in
    line~\ref{line:pop} of Algorithm~\ref{alg:value} to execute the
    \textbf{while} guard in line~\ref{line:wguard} must be true. Note that
    when $k = 2$ this guard consists in testing whether $-\infty = S[1] >
    v$, which is never the case and therefore a \texttt{Pop} operation is
    never executed in a stack that contains $2$ elements.
  \end{itemize}
\end{proof}

\begin{lemma}
  If $v$ was the argument of the last \texttt{Value} command and $k$ is the
  top level of that stack $S$ then $S[k] \leq v$.
  \label{lemma:top}
\end{lemma}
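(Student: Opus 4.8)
The plan is to prove Lemma~\ref{lemma:top} by structural induction over the commands, exactly as the paper sets up. The statement asserts that after processing a \texttt{Value} command with argument $v$, the top element of the stack satisfies $S[k] \leq v$. Since only \texttt{Value} and \texttt{Mark} commands alter the structure, and the hypothesis explicitly fixes $v$ as the argument of the \emph{last} \texttt{Value} command, I would organize the argument around which command is being processed, carrying the invariant forward.

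Let me sketch the two cases. First I would handle the \texttt{Value} command itself, since this is where the claim is most directly established. Here the argument $v$ of the statement coincides with the argument of the command being processed. Algorithm~\ref{alg:value} has two branches governed by the test $S[k] > v$ on line~\ref{line:wif}. If $S[k] \leq v$ already holds (the \textbf{if} is false), then the body is skipped and the top is unchanged, so $S[k] \leq v$ trivially. If instead $S[k] > v$, the \textbf{while} loop pops elements and then line~\ref{line:setTop} sets $S[k] = v$ on the new top; after this assignment we have $S[k] = v \leq v$. Either way the conclusion holds. I should note, using Lemma~\ref{lemma:stwo}, that the loop never empties the stack, so line~\ref{line:setTop} always acts on a valid position; this keeps the case well defined.

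The second case is the \texttt{Mark} command, and here the subtlety is that \texttt{Mark} does not take $v$ as an argument --- it reads the global $v$ left by the preceding \texttt{Value} command, which is precisely the $v$ named in the lemma. So I invoke the induction hypothesis: before \texttt{Mark} runs, the last \texttt{Value} command already guaranteed $S[k] \leq v$. Algorithm~\ref{alg:Mark} branches on $S[k] < v$. If $S[k] < v$, it pushes $v$ (line~\ref{line:push}), making the new top equal to $v$, so $S[k] = v \leq v$. If the test is false, then combined with the induction hypothesis $S[k] \leq v$ we get $S[k] = v$, and \texttt{Mark} only performs a \texttt{Union} without touching the stack top, so $S[k] = v \leq v$ persists. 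In both subcases the invariant survives.

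The main obstacle to watch is bookkeeping about \emph{which} $v$ is in play and the relative ordering of commands. The cleanest framing is that the invariant is reasserted each time a \texttt{Value} command fixes a new $v$, and must then be shown to be preserved by any intervening \texttt{Mark} command before the next \texttt{Value} arrives; the \texttt{Query} and \texttt{Close} commands are inert and preserve it vacuously. I would also keep explicit that $k$ in the statement denotes the current top level, so that every mention of $S[k]$ refers to the top after the command has finished executing. With that convention fixed, the two cases above close the induction.
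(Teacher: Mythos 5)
Your proof is correct and takes essentially the same route as the paper's: structural induction with a case analysis on the \texttt{Value} and \texttt{Mark} commands, each split on its \textbf{if} condition, concluding $S[k]=v$ via line~\ref{line:setTop} or line~\ref{line:push} when the condition holds and via the unchanged stack plus the induction hypothesis when it fails. Your extra remarks (citing Lemma~\ref{lemma:stwo} for well-definedness of the pops, and deriving the equality $S[k]=v$ in the failed-test \texttt{Mark} subcase) are harmless refinements of the paper's argument.
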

\begin{proof}$ $\newline
  \begin{itemize}
  \item The \texttt{Mark} command. When the \textbf{if} condition of
    Algorithm~\ref{alg:Mark} is true we have that line~\ref{line:push}
    executes. After which $S[k] = v$ and the Lemma condition is
    verified. Otherwise the \textbf{if} condition is false and the stack is
    kept unaltered, in which case the result follows by induction
    hypothesis.

  \item The \texttt{Value} command. When the \textbf{if} condition of
    Algorithm~\ref{alg:value} fails the Lemma property is immediate. Hence
    we only need to check the case when the \textbf{if} condition holds. In
    this case line~\ref{line:setTop} must eventually execute at which point
    we have that $S[k] = v$ and the Lemma condition is verified.
  \end{itemize}
\end{proof}

Let us now focus on more global properties. Next we show that the values
stored in $S$ are in increasing order.

\begin{lemma}
  For any indexes $k$ and $k'$ of the stack $S$ we have that if $k' < k$
  then $S[k'] < S[k]$.
  \label{lemma:sortS1}
\end{lemma}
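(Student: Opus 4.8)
The plan is to prove this by structural induction, following the same command-by-command template used in the earlier lemmas. The statement claims strict monotonicity of the stack: lower positions hold strictly smaller values. I only need to verify that the \texttt{Mark} and \texttt{Value} commands preserve this invariant, since \texttt{Query} and \texttt{Close} leave the stack untouched. The key reduction I would use throughout is that it suffices to check *adjacent* pairs: if $S[k'] < S[k'+1]$ holds for every valid $k'$, then transitivity gives $S[k'] < S[k]$ for all $k' < k$. So the real work is maintaining the invariant at the top of the stack, exactly where each command operates.

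Let me think about what I'm actually verifying. For the \texttt{Mark} command, the only change is a possible \texttt{Push}. This push happens precisely when the \textbf{if} guard on line~\ref{line:mif} holds, i.e. $S[k] < v$, and it pushes $v$ on top. So the newly created adjacent pair is $(S[k], v)$ with $S[k] < v$ — the guard *is* exactly the strict inequality I need, so the new top pair is fine and all lower pairs are unchanged. For the \texttt{Value} command the analysis is more delicate because it both pops and overwrites the top. After the \textbf{while} loop terminates, line~\ref{line:setTop} sets $S[k] = v$, and the loop guard on line~\ref{line:wguard} having become false means $S[k-1] < v$ (it exited because $S[k-1] \geq v$ failed). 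That gives exactly $S[k-1] < S[k]$ for the final top, preserving strictness at the top; the pairs strictly below the top were already in increasing order by the induction hypothesis and were not modified.

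**The main obstacle** will be handling the \texttt{Value} command when the outer \textbf{if} on line~\ref{line:wif} holds but the \textbf{while} loop performs zero iterations — and, dually, making sure the loop-exit reasoning is airtight when $v$ equals a stack value. The guard uses $\geq$, so the loop pops while $S[k-1] \geq v$ and stops when $S[k-1] < v$; I must confirm this strict inequality survives the assignment $S[k] = v$, and that Lemma~\ref{lemma:stwo} guarantees the index $k-1$ always refers to a genuine element (so the guard is well-defined and we never pop below the sentinel). The subtle point is the distinction between the weak inequality $S[k] \leq v$ from Lemma~\ref{lemma:top} and the strict inequality asserted here: the weak version can be tight, but that tightness occurs at the *current* top relative to the last value $v$, whereas the strict ordering being proved is *between distinct stack positions*, and the popping mechanism is precisely what eliminates equal neighbours. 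I would make this explicit to avoid conflating the two.

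Finally, I would close by noting that in the \texttt{Value} case where the outer \textbf{if} fails, the stack is entirely unchanged and the result is immediate from the induction hypothesis, so every branch is covered.
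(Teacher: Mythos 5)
Your proposal is correct and follows essentially the same route as the paper's proof: structural induction over the \texttt{Mark} and \texttt{Value} commands, where the new top pair is handled by the \textbf{if} guard in line~\ref{line:mif} (for \texttt{Mark}) and by the failure of the \textbf{while} guard in line~\ref{line:wguard} before line~\ref{line:setTop} executes (for \texttt{Value}), with the remaining pairs settled by the induction hypothesis and transitivity. The adjacent-pair reduction and the explicit appeal to Lemma~\ref{lemma:stwo} are minor presentational differences, not a different argument.
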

\begin{proof}$ $\newline
  \begin{itemize}
  \item The \texttt{Value} command. This command does not push elements
    into the stack, instead it pops elements. This means that, in general,
    a few relations are discarded. The remaining relations are preserved by
    the induction hypothesis. The only change that we need to verify is if
    the \texttt{Top} of the stack $S$ changes, line~\ref{line:setTop} of
    Algorithm~\ref{alg:value}. Hence we need to check the case when $k$ is
    the top level of the stack. Note that line~\ref{line:setTop} occurs
    immediately after the \textbf{while} cycle.  Which means that its guard
    is false, i.e., we have that $S[k-1] < v = S[k]$. Hence the desired
    property was established for $k' = k-1$. For any other $k' < k-1$ we
    can use the induction hypothesis to conclude that $S[k'] < S[k-1]$,
    which combined with the previous inequality and transitivity yields the
    desired property that $S[k'] < S[k]$.
  \item The \texttt{Mark} command. The only operation performed by this
    command is to push the last element into the stack. Hence when $k$ is
    below the top of the stack the property holds by induction. Let us
    analyze the case when the top of the stack changes, i.e., when $k$ is
    the top level of the stack. The change occurs in line~\ref{line:push}
    of Algorithm~\ref{alg:Mark} in which case we have that $S[k-1] < v =
    S[k]$. Hence we extend the argument for $k' < k-1$ as in the
    \texttt{Value} command by induction hypothesis and transitivity.
  \end{itemize}
\end{proof}
Likewise the converse of this Lemma can now be established.
\begin{lemma}
  For any indexes $k$ and $k'$ of the stack $S$ we have that if $S[k] <
  S[k']$ then $k < k'$.
  \label{lemma:sortS2}
\end{lemma}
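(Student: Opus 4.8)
The plan is to derive this statement as a purely logical consequence of Lemma~\ref{lemma:sortS1}, rather than redoing a structural induction over the commands. Lemma~\ref{lemma:sortS1} already asserts that the stack values are \emph{strictly} increasing in the index, i.e., $k' < k$ implies $S[k'] < S[k]$. This strictness is exactly what lets us recover the converse by a trichotomy argument on the indexes.

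Concretely, I would argue by contraposition. Assume $S[k] < S[k']$ and suppose, for contradiction, that $k \geq k'$. There are two cases. If $k = k'$, then $S[k] = S[k']$, which contradicts $S[k] < S[k']$ since a strict inequality forces $S[k] \neq S[k']$. If $k > k'$, then applying Lemma~\ref{lemma:sortS1} with the roles of the two indexes swapped (that is, using $k' < k$) yields $S[k'] < S[k]$, again contradicting the hypothesis $S[k] < S[k']$. Since both cases are impossible, we must have $k < k'$, which is the claim.

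The only subtlety worth flagging is that the argument relies on Lemma~\ref{lemma:sortS1} providing a \emph{strict} order: if the previous lemma had only guaranteed $S[k'] \leq S[k]$, we could not rule out the $k = k'$ case and the converse would fail. Since strictness is already in hand, there is no real obstacle here, and in particular no appeal to the command semantics or the induction hypothesis is needed. I would therefore keep the proof to these few lines, noting explicitly that it is a direct consequence of Lemma~\ref{lemma:sortS1} and the trichotomy of the integer ordering on stack indexes.
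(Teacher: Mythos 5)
Your proof is correct and follows essentially the same route as the paper's: both argue by contradiction, rule out the $k = k'$ case using the strictness of the inequality $S[k] < S[k']$, and then apply Lemma~\ref{lemma:sortS1} to the remaining case $k' < k$ to reach a contradiction. The only cosmetic difference is that you frame the appeal to strictness and trichotomy more explicitly, which the paper compresses into the remark that $S$ is used as an array.
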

\begin{proof}
  Assume by contradiction that there are $k$ and $k'$ such that
  $S[k] < S[k']$ and $k' \leq k$. Because $S[k] \neq S[k']$ we have that
  $k \neq k'$, since we are using $S$ as an array. Hence we must have that
  $k' < k$ and can now apply Lemma~\ref{lemma:sortS1} to conclude that
  $S[k'] < S[k]$, which contradicts the order relation in our hypothesis.
\end{proof}
This sorted property also gives structure to the sets of positions.
\begin{lemma}
  For any indexes $k' < k$ and positions $p' \in P[k']$ and $p \in P[k]$
  we have that $p' < p$.
  \label{lemma:sortP}
\end{lemma}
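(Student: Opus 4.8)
The plan is to prove the invariant by structural induction, following the same scheme used in the previous lemmas. As before, the \texttt{Query} and \texttt{Close} commands leave the position sets untouched, so only the \texttt{Mark} and \texttt{Value} commands need to be analysed. Before treating them I would record an auxiliary observation: positions enter the sets $P[\cdot]$ in strictly increasing order. A position $p$ is inserted only by a \texttt{Mark} command, which uses the current value of $j$, and $j$ increases strictly between successive \texttt{Mark} commands since no position is marked twice. Consequently, at the moment a \texttt{Mark} command runs, the current $j$ is strictly larger than every position already stored in any $P[k']$.

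For the \texttt{Mark} command there are two cases, according to the \textbf{if} test of Algorithm~\ref{alg:Mark}. If $S[k] < v$, line~\ref{line:push} pushes $v$ and \texttt{Make-Set}$(j,k)$ creates the singleton $P[k]=\{j\}$ at the new top level. The sets at lower levels are unchanged, so by the induction hypothesis they remain mutually ordered; and by the auxiliary observation every position $p'$ stored below satisfies $p' < j$, which is exactly the inequality required for the new top. In the \textbf{else} branch the call \texttt{Union}$(P[k],\{j\},k)$ merely adds $j$ to the current top set $P[k]$. For a position $p \in P[k]$ that was already present the inequality $p' < p$ holds by induction, and for the newly added $p = j$ it holds by the auxiliary observation; hence the invariant is preserved.

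For the \texttt{Value} command the key remark is that it never inserts a new position into any set, since the statement $j \gets j+1$ only prepares the next position. The command can therefore only \emph{merge} existing sets: each pass of the \textbf{while} loop performs a \texttt{Union} of the top set with the one below it and then pops, so the net effect is to fuse a contiguous block of topmost sets into a single set while leaving every strictly lower set unchanged. I would then fix the final top level $k$ and any $k' < k$. The set sitting at level $k$ after the command is contained in the union of the sets that occupied levels $\geq k$ before the command, so any $p \in P[k]$ belonged to some old set $P[b]$ with $b \geq k > k'$; applying the induction hypothesis to the configuration before the command gives $p' < p$ for every $p' \in P[k']$. For two levels both strictly below $k$ the sets are untouched and the inequality is inherited directly from the induction hypothesis, which establishes the invariant in the final configuration.

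The main obstacle I anticipate is the index bookkeeping inside the \textbf{while} loop of Algorithm~\ref{alg:value}, where $k$ is decremented repeatedly and the operand $P[k-1]$ refers to a different physical set on each pass. The cleanest way around this is to avoid arguing pass by pass and instead to observe, as above, that the cumulative effect is a coarsening of the partition that merges only a contiguous top segment; since merging contiguous blocks preserves a total order between blocks, the ordering invariant survives intact. The strict monotonicity of $j$ is what carries the \texttt{Mark} case, and I would state it explicitly rather than leave it implicit.
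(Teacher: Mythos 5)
Your proposal is correct and follows essentially the same route as the paper's proof: the \texttt{Mark} case rests on the current position $j$ exceeding every previously stored position, and the \texttt{Value} case rests on the fact that popping only fuses the topmost contiguous sets while leaving all lower sets $P[k']$ unchanged, so the induction hypothesis applied to the pre-command configuration carries through. Your explicit statement of the monotonicity of $j$ and the case split on the \textbf{if} in \texttt{Mark} are just slightly more detailed renderings of the same argument.
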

\begin{proof}$ $\newline
  \begin{itemize}
  \item The \texttt{Mark} command. This operation inserts the current
    position $j$ into the set that corresponds to the top of the stack.
    The top might have been preserved or created by the operation, both
    cases can be justified in the same way. We only need to consider the
    case when \texttt{Top}$(S) = S[k]$ and $p = j$, any other
    instanciation of the variables in the Lemma will correspond to
    relations that were established before the structure was
    modified. Hence we only need to show that $p' < j$ for any $p'$ in any
    $P[k']$. This is trivial because $j$ represents the current position in
    $A$, which is therefore larger than any previous position of $A$ that
    may be represented by $p'$.
  \item The \texttt{Value} command. As this command pops elements from the
    stack, it has the side effect of merging the position sets. Hence the
    only new relation is for positions at the top of the stack, i.e., when
    $p \in P[k]$ and \texttt{Top}$(S) = S[k]$. We only need to consider
    where position $p$ was before the operation, i.e., $p \in P_b[k_b]$,
    were $P_b[k_b]$ represents a set of positions before the operation is
    executed. Because the \texttt{Value} command merges the position sets
    which are highest on the stack we have that $k \leq k_b$. Now, for any
    $k' < k$ and $p' \in P[k']$, we have that $P[k'] = P_b[k']$ because the
    sets of positions below the top of the stack are not altered by the
    operation. In essence we have that $k' < k_b$ and $p' \in P_b[k']$ and $p
    \in P_b[k_b]$, therefore by induction hypothesis we obtain $p' < p$, as
    desired.
  \end{itemize}
\end{proof}
We can now state our final invariant, which establishes that our algorithm
is correct.
\begin{theorem}
  At any given instant when $j$ is the current position over $A$ we have
  that if $i \in P[k']$ then RMQ$(i,j) = S[k']$.
  \label{teo:correct}
\end{theorem}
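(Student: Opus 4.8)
The plan is to prove the invariant by structural induction on the command stream, in exactly the style of the preceding lemmas: assume the statement holds before a command and verify it is re-established afterwards. The base case is the initialized structure, where no position lies in any $P[k']$, so the implication is vacuous; the \texttt{Query} and \texttt{Close} commands leave $S$ and the sets $P$ untouched and so need no separate argument. Thus only \texttt{Value} and \texttt{Mark} require work. The identity that drives everything is the decomposition of a query under an advance of the right endpoint: if the current position moves from $j-1$ to $j$ with $A[j]=v$, then for every $i$ we have RMQ$(i,j) = \min(\text{RMQ}(i,j-1), v)$, since $\{m : i \le m \le j\} = \{m : i \le m \le j-1\} \cup \{j\}$.

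For the \texttt{Value} command the current position advances to $j$ and $v$ becomes $A[j]$; the subtlety, and the main obstacle, is that the query value changes \emph{simultaneously} for every active $i$, so I must show the stack surgery reproduces the min-update above on all sets at once. If the guard $S[k]>v$ of Algorithm~\ref{alg:value} fails, the stack is unchanged and Lemma~\ref{lemma:sortS1} gives $S[k'] \le S[k] \le v$ for every level $k'$, so $\min(S[k'],v) = S[k']$ and the invariant survives by the induction hypothesis. If the guard holds, let $k_f$ be the level left at the \texttt{Top} after the \textbf{while} loop, at which point line~\ref{line:setTop} sets $S[k_f]=v$. I would then split the positions into two groups. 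For a set $P[k']$ with $k' < k_f$, the loop terminated because $S[k_f-1] < v$, so by Lemma~\ref{lemma:sortS1} its value satisfies $S[k'] \le S[k_f-1] < v$ and is unchanged, giving $\min(S[k'],v) = S[k']$. For the positions swept into the new \texttt{Top}, the loop condition guarantees that every old level from $k_f$ up to the former top carried a value $\ge v$, so each such position previously answered RMQ$(i,j-1) \ge v$ and now answers $\min(\text{RMQ}(i,j-1),v) = v = S[k_f]$. The \texttt{Union} calls in line~\ref{line:union} are precisely what relocate these positions down to level $k_f$, as already exploited in the proof of Lemma~\ref{lemma:sortP}; the freshly advanced position $j$ is not yet a member of any $P[k']$, so it creates no new obligation.

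For the \texttt{Mark} command the position $j$ does not move, and I only need to account for the single position $j$ being inserted while checking that the other sets are untouched. By Lemma~\ref{lemma:top} we have $S[k] \le v$, which the \textbf{if} in Algorithm~\ref{alg:Mark} splits into two cases. If $S[k] < v$, line~\ref{line:push} pushes $v$ and creates the singleton $\{j\}$ at the new top; since RMQ$(j,j) = A[j] = v$ equals this new $S[k]$, and every lower set keeps both its value and its query value (as $j$ and $S$ are unchanged), the invariant holds by induction. If $S[k] = v$, the command unites $\{j\}$ into $P[k]$, and again RMQ$(j,j) = v = S[k]$ matches the common value shared by the elements already in $P[k]$ by the induction hypothesis, so the enlarged set remains consistent.

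I expect the \texttt{Value} case to be the crux. The key point to make airtight is that the two guards together cut the stack at exactly the threshold $v$ --- the strict $>$ in the \textbf{if} and the non-strict $\ge$ in the \textbf{while} --- so that the repeated \texttt{Union} sends every position of value $\ge v$ to the surviving level $k_f$ while leaving every strictly smaller one fixed. Once Lemmas~\ref{lemma:top}, \ref{lemma:sortS1} and~\ref{lemma:sortP} are in hand, the \texttt{Mark} case and the two structure-preserving commands are routine.
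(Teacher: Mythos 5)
Your proof is correct and takes essentially the same approach as the paper: structural induction over the command stream, with the \texttt{Mark} case resolved via Lemma~\ref{lemma:top} and the \texttt{Value} case split at the threshold $v$ (old values $\leq v$ keep their level and answer; old values $\geq v$ are swept by the unions into the surviving top, which is set to $v$). The only cosmetic differences are that you state the identity $\mathrm{RMQ}(i,j)=\min(\mathrm{RMQ}(i,j-1),v)$ explicitly where the paper uses it implicitly, and you justify the ``swept'' case by tracing the \textbf{while} loop forward to the final top $k_f$, whereas the paper reaches the same conclusion by reasoning backward from the final state with Lemmas~\ref{lemma:stwo} and~\ref{lemma:sortS2}.
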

\begin{proof}$ $\newline
  \begin{itemize}
  \item The \texttt{Mark} command. This command does not alter the sequence
    $A$. Therefore none of the RMQ$(i,j)$ values change. Since almost all
    positions and position sets $P[k']$ are preserved the implication is
    also preserved. The only new position is $j \in P[k]$, therefore the
    only case we need to consider is when $i = j$ and $k'$ is the top level
    of the stack $S$, i.e., $k' = k$. In this case we have that
    RMQ$(j,j) = A[j] = v$, where $v$ is the argument given in the last
    \texttt{Value} command. Now let us consider the \textbf{if} condition
    in line~\ref{line:mif} of Algorithm~\ref{alg:Mark}. This further
    divides the argument into two cases:
    \begin{itemize}
    \item When this condition holds then line~\ref{line:push} of
      Algorithm~\ref{alg:Mark} executes and makes $S[k] = v$. Hence
      RMQ$(j,j) = S[k]$.
    \item When this condition fails we have $v \leq S[k]$. Applying
      Lemma~\ref{lemma:top} we obtain $S[k] \leq v$ and therefore conclude
      that $S[k] = v$. Hence RMQ$(j,j) = S[k]$.
    \end{itemize}
  \item The \texttt{Value} command. This command essentially adds a new
    value $v$ at the end of $A$, i.e., it sets $A[j] = v$, where $j$ is now
    the last position of $A$. This implies that $j$ is not yet a marked
    position. Therefore for this command we do not need to consider $i =
    j$ because $j$ is not a member of a position set $P[k']$.

    Thus we only need to consider cases when $i < j$. Consider such an
    index $i$, which moreover belongs to the position set $P[k']$, i.e.,
    $i \in P[k']$. The position $i$ must necessarily occur in some set
    $P_b[k'_b]$, which is a set of positions that exists before the
    \texttt{Value} operation alters the stack. In this case we have by
    induction hypothesis that RMQ$(i,j-1) = S_b[k'_b]$. We now divide the
    proof into two cases:
    \begin{itemize}
    \item When $S_b[k'_b] \leq v$, in which case RMQ$(i,j) = S_b[k'_b]$. In
      this case we only need to show that the \texttt{Value} command does
      not alter the index $k'_b$ of the stack, i.e., that $i \in P[k'_b]$ and
      that $S_b[k'_b] = S[k'_b]$. Therefore the desired property holds for
      $k' = k'_b$. This is imediate as the case hypothesis means that even if
      the \texttt{Value} operation happens to extrude level $k'_b$ to the
      top of the stack it does eliminate it, because
      Lemma~\ref{lemma:sortS1} implies that $S_b[k'_b-1] < S_b[k'_b] \leq v$,
      and therefore the while guard in line~\ref{line:wguard}
      fails.
    \item When $v < S_b[k'_b]$, in which case RMQ$(i,j) = v$. In this case
      the value $S_b[k'_b]$ will be discarded by the \texttt{Value}
      command. Let $k$ correspond to the level that is at the top of the
      stack, after the command. By Lemma~\ref{lemma:top} we have that
      $S[k] \leq v$ combining both these inequalities yields
      $S[k] < S_b[k'_b]$. Using Lemma~\ref{lemma:sortS1} we have that
      $S[k-1] < S[k]$, note that Lemma~\ref{lemma:stwo} guarantees that the
      level $k-1$ exists. Moreover because $k$ is the top level of $S$
      after the command we have $S_b[k-1] = S[k-1]$. Combining these
      relations we obtain that $S_b[k-1] < S_b[k'_b]$, to which we apply
      Lemma~\ref{lemma:sortS2}, to conclude that $k-1 < k'_b$. Therefore
      either $k = k'_b$ or the level $k'_b$ was excluded from the stack. In
      both cases position $i$ must be in $P[k]$, either because it was
      already there or it was eventually transferred by the union commands
      in line~\ref{line:union}. Hence we only need to check that
      $S[k] = v$. Let $k_b$ be the \texttt{Top} of stack $S_b$ before the
      command is executed. Hence $k'_b \leq k_b$ and by
      Lemma~\ref{lemma:sortS1} we obtain $S_b[k'_b] \leq S_b[k_b]$. Using
      this case hypothesis and transitivity we obtain that $v <
      S_b[k_b]$. This implies that the condition of the \textbf{if} in
      line~\ref{line:wif} of Algorithm~\ref{alg:value} is true. Therefore
      line~\ref{line:setTop} eventually executes and obtains the condition
      $S[k]= v$ as desired.
    \end{itemize}
  \end{itemize}

\end{proof}
\subsection{Analysis}
\label{sec:analysis}
In this section we discuss several issues related to the performance of our
data structure. Namely we start off by reducing the space requirements from
$O(m)$ to $O(\ell)$. First we need to notice in which ways our data
structure can waist space. In particular the \texttt{Close} command waists
space in the stack itself. In the rightmost structure of
Figure~\ref{fig:stack1} we have that the set $P[3]$ becomes empty after the
\texttt{C 3} command. This set which corresponds to $S[3] = 26$ on the
stack. In essence the item $S[3]$ is no longer necessary in the
stack. However it is kept inactive in the stack, the hash table and the UF
data structure. It is marked as inactive in the hash table, but it still
occupies memory.

Recall that our data structure consists of three components: a stack, a
hash table and a Union-Find data structure. These structures are linked as
follows: the stack contains values and pointers to the hash table; the
hash-table uses sequence positions as keys and UF elements as values; the
Union-Find data structure is used to manipulate sets of reduced positions
and each set in turn points back to a stack position.

Let us now use an amortizing technique to bound the space requirements of
this structure. We start off by allocating a data structure that can
contain at most $a$ elements, where $a$ is a small initial
constant. Allocating a structure with this value implies the following
guarantees:
\begin{itemize}
\item It is possible to insert $a$ elements into the stack without overflow.
\item It is possible to insert $a$ elements into the hash table and the
  overall occupation is always less than half. This guarantees average and
  high probability efficient insertions and searches.
\item It is possible to use $a$ positions for Union-Find operations.
\end{itemize}
Hence we can use this data structure until we reach the limit $a$. When the
limit is reached we consider the number of currently active marked
positions, i.e., the number of positions $i$ such that \texttt{M} was
issued at position $i$, but up to the current position no \texttt{Close}
$i$ was never issued. To determine this value it is best to keep a counter
$c$. This counter is increased when a \texttt{Mark} command is issued,
unless the previous command was also a \text{Mark} command, in which case
it is a repeated marking for a certain position. The counter is decreased
when a \texttt{Close} $i$ is issued, provided position $i$ is currently
active, i.e., it was activated by some \texttt{Mark} command and it has not
yet been closed by any other \texttt{Close} command. Hence by consulting
this counter $c$ we can determine in $O(1)$ time the number of active
positions at this instant. We can now alloc a new data structure with
$a' = 2c$, i.e., a data structure that can support twice as many elements
as the number of current active positions. Then we transfer all the active
elements from the old data structure to the new data structure. The process
is fairly involved, but in essence it requires $O(a \times \alpha(a))$ time
and when it finishes the new data structure contains all the active
positions, which occupy exactly half of the new data structure. This factor
is crucial as it implies that the amortized time of this transfer is in
fact $O(\alpha(a))$ and moreover that the allocated size is at most
$O(2\ell)$.

We now describe how to transfer only the active elements from the old data
structure to the new data structure. First we mark all the elements in the
old stack as inactive. In our implementation we make all the values
negative, as the test input sequences contained no negative values but
other marking schemes may be used. This is also the scheme we used to mark
inactive hash entries.

Now traverse the old hash table and copy all the active values to the new
hash table. Also initilize the pointers from the new hash table to the new
UF data structure. The new UF positions are initialized incrementally,
starting at $1$. Hence every insertion into the new hash function creates a
new UF position, that is obtained incrementally from the last one. We also
look up the old UF positions that are given by active entries of the old
hash table. We use those old active sets to reactivate the old stack
entries. This process allowed us to identify which stack entries are
actually relevant in the old stack. With this information we can compact
the old stack by removing the inactive positions. We compact the old stack
directly to the new stack, so the new stack contains only active
positions. We also add pointers from the old stack to the new stack. Each
active entry of the old stack points to its correspondent in the new
stack. In our implementation this was done by overriding the pointers to
the old hash table, as they are no longer necessary.

At this point the new stack contains the active values, but it still has
not initialized the pointers to the new hash table. These pointers are in
fact position values, because positions are used as keys in the
hash-table. To initialize these pointers we again traverse the active
entries of the old hash table and map them to the old UF positions and to
the corresponding old stack items. We now use the pointer from the old
stack item to the new stack item and update the position pointer of the new
stack to the key of the active entry of the new hash that we are
processing. This assignment works because positions are kept invariant
from the old data structure to the new one. Therefore these positions are
also keys of the new hash. We finish this process by updating the pointers
of the new UF data structure to point to the corresponding items of the new
stack. Since we now know the active items in the new stack and have pointers
from the new stack to the new hash and from the new hash to the new UF
position, we can simply assign the link from the new UF set back to the
item of the new stack item. Thus closing this reference loop.

At this point almost all of the data structure is linked up. The new stack
points to the new hash table, the new hash table points to the new UF
structure and the sets of the new UF structure point to the new stack. The
only missing ingredient is that the sets of the new UF structure are still
singletons, because no \texttt{Union} operations have yet been issued. The main
observation to recover this information is that several positions in the
new UF structure point to the same item in the new stack. Those positions
need to be united into the same set. To establish these unions we traverse
the new UF data structure. For each UF position we determine its
corresponding stack item, note that this requires a Find operation. We then
follow its pointer to an item in the new hash, and a pointer from that item
back to a position in the new UF data structure. Now we unite two UF sets,
the one that contained the initial position and the one that contains the
position that was obtained by passing through the stack and the hash.

\begin{theorem}
  It is possible to process online a sequence of RMQ commands in $O(\ell)$
  space using $O(\alpha(\ell))$ expected amortized time per command.
  \label{teo:basic}
\end{theorem}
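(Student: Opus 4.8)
The plan is to combine the correctness already established in Theorem~\ref{teo:correct} with a careful accounting of the three cost sources of each command---the stack operations, the Union--Find operations, and the hash-table operations---together with an amortized analysis of the periodic rebuild described above. I would first fix the per-command work: a \texttt{Mark} performs one \texttt{Push} or one \texttt{Union} and in either case creates exactly one new Union--Find element and one hash entry; a \texttt{Value} performs a run of \texttt{Pop} and \texttt{Union} operations (line~\ref{line:union}) followed by at most one assignment; a \texttt{Query} performs one \texttt{Find} and one hash lookup; and a \texttt{Close} performs one hash lookup. Since each stack element is pushed at most once and popped at most once, the total number of stack operations, and hence of \texttt{Union} operations triggered in line~\ref{line:union}, is $O(m)$, and each individual stack operation is $O(1)$ amortized. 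The hash lookups and insertions are $O(1)$ expected precisely because the rebuild keeps the load factor below one half, which is the sole source of the word ``expected'' in the statement.

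For the space bound I would argue directly from the rebuild invariant. Before the first rebuild the capacity is the initial constant $a = O(1)$. Each rebuild is triggered exactly when the current structure is full, at which moment we read the active counter $c$ and allocate a fresh structure of capacity $2c$. Because $c$ never exceeds the maximum number of simultaneously active positions, $c \le \ell$, every allocated capacity is at most $2\ell$; and the transient peak during a transfer is bounded by the old capacity plus the new one, i.e. $O(\ell) + O(\ell) = O(\ell)$. Hence the resident space is $O(\ell)$ at every instant.

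The heart of the argument is the amortized cost of the rebuilds, where I would exploit the half-full invariant produced by allocating $2c$ rather than $c$. Immediately after a rebuild the new structure holds exactly $c$ active elements in a capacity of $2c$, so at least $c$ further insertions---each caused by a distinct \texttt{Mark}---must occur before the structure fills and the next rebuild is forced. Since each rebuild processes a full structure of size $a \le 2\ell$ in time $O(a\,\alpha(a))$, I would charge this cost to the $\ge c = a/2$ intervening insertions, so that each \texttt{Mark} pays $O(a\,\alpha(a)/c) = O(\alpha(a)) = O(\alpha(\ell))$ toward the eventual rebuild. Adding this rebuild charge to the per-command costs above, and using the standard $O(\alpha(\ell))$ amortized bound for a sequence of Union--Find operations over a universe of size $O(\ell)$, every command runs in $O(\alpha(\ell))$ expected amortized time.

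The step I expect to be the main obstacle is making the rebuild amortization airtight when the capacity can both grow and shrink: one must verify that no interleaving of \texttt{Mark} and \texttt{Close} commands can force rebuilds more frequently than once per $\Theta(c)$ insertions, so that thrashing between a near-empty and a full structure is impossible. The half-full invariant is exactly what rules this out, but the argument must also confirm that \emph{only} \texttt{Mark} commands add elements---so that the rebuild charge can be borne entirely by insertions---and that the $O(\alpha(\ell))$ Union--Find bound survives across rebuilds, since the Union--Find universe is reconstructed from scratch each time rather than maintained as a single monolithic instance.
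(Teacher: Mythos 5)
Your proposal is correct and follows essentially the same route as the paper: it points to the component complexities (array-based stack at $O(1)$ per operation, Union--Find at $O(\alpha)$ amortized, half-loaded hash table at $O(1)$ expected) and then amortizes the $O(a\,\alpha(a))$ transfer cost over the $a/2$ insertions that the half-full invariant forces to occur between consecutive rebuilds. The paper phrases this last step as a two-credits-per-insertion accounting rather than your direct charging scheme, but the two are the same argument.
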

\begin{proof}
  The discussion in this section essentially establishes this result. We
  only need to point out the complexities of the data structures that we
  are using. As mentioned before the UF structure requires $O(\alpha(n))$
  amortized time. The stack is implemented over an array and therefore
  requires $O(1)$ per \texttt{Push} and \texttt{Pop} command. In theory we
  consider a hash-table with separate chaining and amaximum load factor of
  50\%, which obtains $O(1)$ expected time per operation. In practice we
  implemented a linear probing approach.

  The final argument is to show that the transfer process requires
  $O(\alpha(\ell))$ amortized time. Whenever a transfer process terminates
  the resulting structure is exactly half full. As the algorithm progresses
  elements are inserted into the structure until it becomes full. Whenever
  an element is inserted we store $2$ credits. Hence when the structure is
  full there is a credit for each element it contains, therefore there are
  enough credits to amortize a full transfer process. We assume that these
  credits are actually multiplied by $\alpha(\ell)$ and whatever is the
  constant of the transfer procedure is.
\end{proof}
One important variation of the above procedure is the offline version of
the problem. Meaning that we are given the complete sequence of commands
and are allowed to process them as necessary to obtain better
performance. In this case we can use a more efficient variant of the Union
Find data structure and obtain $O(1)$ time per
operation, proposed by~\citet*{Gabow_1985}.
\begin{corolary}
  It is possible to process offline a sequence of RMQ commands in $O(\ell)$
  space using $O(1)$ expected amortized time per command.
  \label{cor:const}
\end{corolary}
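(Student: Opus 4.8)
The plan is to obtain this result as a direct specialization of Theorem~\ref{teo:basic}, changing a single component of the data structure. Inspecting that proof, the only source of the $\alpha$ factor is the Union-Find structure: the stack contributes $O(1)$ per \texttt{Push} and \texttt{Pop}, the hash table contributes $O(1)$ expected per operation, and the periodic transfer is amortized to $O(1)$ per command via the credit argument. Hence it suffices to replace the general Union-Find by one that supports the sequence of \texttt{Union} and \texttt{Find} operations our algorithm issues in $O(1)$ time per operation. This is exactly what the algorithm of~\citet*{Gabow_1985} provides, under the assumption that the forest describing how sets are merged---the union tree---is known before the operations are executed.

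First I would argue that, in the offline setting, this union tree can be precomputed. The crucial observation is that our \texttt{Union} operations are not arbitrary: by Lemma~\ref{lemma:sortP} and the stack discipline, a set is only ever merged into the set immediately below it on the stack (line~\ref{line:union} of Algorithm~\ref{alg:value}), or a fresh singleton is merged into the current top (Algorithm~\ref{alg:Mark}). The pattern of these merges is determined solely by the comparisons between consecutive \texttt{Value} arguments, that is, by the nearest-smaller-value relationships of the value sequence, which is precisely the shape of a Cartesian tree. Since offline we are given the complete command sequence, a single linear-time pass that simulates the stack---tracking which level absorbs which, but performing no actual set operations---suffices to emit the union tree. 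Feeding this tree to the Gabow--Tarjan structure then lets every real \texttt{Union} and every \texttt{Find} run in $O(1)$ time.

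Next I would handle the interaction with the space-reducing transfer process. The structure is rebuilt in epochs, and each transfer relabels the active positions and starts a fresh Union-Find whose sets are singletons. I would treat each epoch independently: within an epoch the union tree is again fixed and, because the entire command sequence is known, can be precomputed exactly as above. Thus Gabow--Tarjan applies epoch by epoch, and the credit-based amortization of the transfers from Theorem~\ref{teo:basic} carries over verbatim, keeping both the space at $O(\ell)$ and the per-command overhead of rebuilding at $O(1)$ amortized.

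The step I expect to be the main obstacle is making the precomputation of the union tree rigorous \emph{across} transfers. In a single uninterrupted run the union tree coincides with the Cartesian tree of the values and is straightforward, but the transfer procedure compacts the stack and renumbers the UF positions, so the tree that Gabow--Tarjan needs for the new epoch is expressed over the \emph{new} labels. I would need to verify that these new labels, and hence the new union tree, depend only on information available before the epoch begins---which holds offline because the contents of each structure are a deterministic function of the command prefix processed so far---so that the tree can indeed be supplied in advance, as the Gabow--Tarjan construction requires.
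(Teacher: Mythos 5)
Your proposal takes essentially the same route as the paper: the paper gives no separate proof of this corollary beyond the sentence preceding it, which simply says to replace the Union-Find component with the offline $O(1)$ amortized structure of \citet*{Gabow_1985}---exactly your key step. Your elaborations (precomputing the union tree by a stack simulation over the known command sequence, and applying the structure epoch-by-epoch across transfers) are details the paper leaves implicit, and they are consistent with its intended argument.
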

On the other extreme of applications we may be interrested in real time
applications. Meaning that we need to focus on minimizing the worst case
time that is necessary to process a given command. In this case we can
modify our data structure to avoid excessively long operations, i.e.,
obtain stricter bounds for the worst case time. As an initial result let us
de-amortize the transfer procedure, assuming the same conditions as in
Theorem~\ref{teo:basic}.
\begin{lemma}
  Given a sequence of RMQ commands it is possible to processes them so that
  the transfer procedures require an overhead of $O(\alpha(\ell))$ expected
  amortized time per command.
  \label{lemma:transfer}
\end{lemma}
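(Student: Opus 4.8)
The plan is to apply the standard technique of incremental global rebuilding: rather than executing the batch transfer described above as a single burst, I would break it into equal increments and schedule one increment per incoming command. Concretely, I would not wait until the active structure is completely full before triggering the transfer. Instead, I would begin building the new structure as soon as the occupancy crosses a threshold that still leaves a slack window of $\Theta(a)$ future commands before the old structure could possibly overflow. Recall from the proof of Theorem~\ref{teo:basic} that immediately after a transfer the structure is exactly half full, so between consecutive transfers there are $\Theta(a)$ insertions; choosing the threshold at, say, three quarters full still leaves a slack of $\Theta(a)$ commands. Throughout this window the old structure remains the live, authoritative copy that answers every \texttt{Query}, while each command additionally performs a bounded slice of the reconstruction passes (over the old hash table, the old stack, and the old Union-Find structure) that populate the new structure.

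The counting argument is then immediate. The full transfer costs $O(a \cdot \alpha(a))$ as established above, and it is amortized over the $\Theta(a)$ commands of the slack window, so each command carries an overhead of $O(\alpha(a)) = O(\alpha(\ell))$, the last equality holding because the new capacity is $\Theta(\ell)$. This is exactly the credit scheme already used in the proof of Theorem~\ref{teo:basic}, except that the two credits deposited per insertion are now spent continuously, one increment at a time, rather than released in a single burst; that rescheduling is precisely the de-amortization we want. The expectation is inherited from the hash table, since each increment touches a constant number of cells whose probing cost is $O(1)$ in expectation.

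The delicate point, and the step I expect to be the main obstacle, is that commands keep arriving and mutating the structure while the multi-pass reconstruction is in flight, so the passes can no longer assume a frozen snapshot of the old structure. I would resolve this by keeping the old structure fully operational and maintaining a small journal that records every mutation issued during the slack window, namely the \texttt{Mark} and \texttt{Close} operations and the \texttt{Union} operations triggered by the pops of live \texttt{Value} commands. The reconstruction builds the new structure from the state of the old one as it reads it, and at the instant of switchover I replay the journal against the finished new structure. Two facts make this tractable. First, by Lemma~\ref{lemma:sortS1} the stack is strictly increasing, so the stabilized lower portion that has already been copied is only ever affected by merges into an already-present level, never by the appearance of genuinely new items; such merges are captured by the journaled \texttt{Union}s. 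Second, by Lemma~\ref{lemma:sortP} the position sets carried by a \texttt{Value} pop are disjoint intervals that move wholesale to a lower level, so replaying the journaled unions reproduces exactly the set partition the old structure reached. The crux of the argument is therefore to show that the journal has length $O(a)$ and that each of its $O(\alpha(\ell))$-cost replay steps can be folded into the same per-command budget when spread over the window, so that the interleaving adds no asymptotic overhead.

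Finally I would verify the sizing invariant. The new capacity $2c$ must absorb not only the $c$ active positions present when the transfer began but also the net growth during the slack window; since that window contains $\Theta(a)$ commands and at most one active marking can be added per command, I would fix the threshold and the constant in $a' = 2c$ so that the new structure is still at most half full at switchover. Closing this invariant is what lets the amortized argument restart cleanly for the next round, and hence guarantees that the $O(\alpha(\ell))$ transfer overhead holds uniformly across every transfer and across the whole command sequence.
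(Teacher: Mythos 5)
Your plan is the classical incremental global rebuilding technique, and it is a genuinely different route from the paper's. The paper never lets the transfer touch a structure that can still mutate: when the active structure fills up it is frozen, a fresh active structure receives all subsequent \texttt{Mark} and \texttt{Value} commands, and the background merge operates on \emph{two frozen structures} only; \texttt{Query} is answered by fanning out to every consistent structure and returning the overall minimum, and \texttt{Close} is a monotone flag applied to all copies (up to four structures coexist, which is why the space stays $O(\ell)$). This routing makes your ``delicate point'' --- mutation of the structure while it is being rebuilt --- disappear entirely, at the price of the query fan-out. Your single-structure scheme avoids the fan-out but must solve the concurrency problem head on, and that is exactly where the proposal has genuine gaps.

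Three things are missing or would fail as written. First, you say the journal is replayed ``at the instant of switchover'': a journal of length $\Theta(a)$ replayed at one instant is a $\Theta(a\,\alpha(\ell))$ burst, which is precisely what this lemma exists to eliminate --- it is invoked in Theorem~\ref{teo:high} as a de-amortization device, and as a purely amortized statement it would already follow from Theorem~\ref{teo:basic}. Later you say the replay steps are ``folded into the per-command budget'', but you cannot have both: spreading the replay requires the reconstruction to finish strictly inside the window plus a catch-up schedule (replay consuming entries faster than the journal grows) that you never specify. Second, a single \texttt{Value} command can pop $\Theta(a)$ stack levels and hence append $\Theta(a)$ \texttt{Union} entries to the journal at once, so any per-command accounting of replay work fails in the worst case; the paper's routing makes this issue moot because frozen structures never receive \texttt{Value} commands at all, and the de-amortization of \texttt{Value} itself is deferred to Theorem~\ref{teo:high}. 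Third, your reconstruction reads the old structure without a consistent snapshot while pops can reach below the already-copied frontier and stack indices shift; the appeal to Lemmas~\ref{lemma:sortS1} and~\ref{lemma:sortP} is a reasonable start, but you still must explain how journaled unions, recorded in terms of old stack levels, are mapped onto items of the half-built new structure (necessarily via position values, the only invariants shared by the two structures). Your sizing correction for $a'=2c$ is fine. Until the scheduling and snapshot issues are pinned down, the proposal is a plausible plan rather than a proof.
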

\begin{proof}
  Note that the transfer process requires $O(a\times \alpha(a))$ amortized
  time to transfer a structure that supports $a$ elements.

  We modify the transference procedure so that it transfers two full
  structures at the same time, by merging their active elements into a new
  structure. The process is essentially similar to the previous
  transference procedure, with a few key differences.

  An element can only be considered active if it is not marked as inactive
  in one of the old hashes. More precisely: if it is marked as active in
  one hash and as inactive in the other then it is inactive; if it is
  marked as active in one hash and does not exists in the other then it is
  active; if it is marked as active in both then it is active.

  Once the active elements of the old stacks are identified they are merged
  into the new stack, by using the same merging procedure that is used in
  mergeSort algorithm, with the proviso that there should be only one copy
  of the sentinel in the merged stack. The third important sincronization
  point is the union commands. Before starting this process it is necessary
  that all the information from the old structures has been transfered to
  the new one, recall that this process generaly iterates over the new
  structure, not the old ones.

  When the old structures can support $a_1$ and $a_2$ elements respectively
  the merging process requires $O(a_1+a_2)$ operations. Note that we do not
  mean time, instead we mean primitive operations on the data structures
  that compose the overall structure, namely accessing the hash function,
  following pointers or calling union or find. Given this merging primitive
  we can now deamortize our transfer process. Instead of immediately
  discarding a structure that hits its full occupancy we keep it around
  because we can not afford to do an immediate transfer. Instead when we
  have at least two full structures we initiate the transfer process. Again
  to avoid exceeding real time requirements this process is kept running in
  parallel, or interleaved, with the processing of the remaining commands
  in the sequence. Since this procedure requires $O(a_1+a_2)$ operations,
  it is possible to tune it to guarantee that it is terminated by the time
  that at most $(a_1 + a_2)/2$ commands are processed. In this case each
  command only needs to contribute $O(1)$ operations to the merging
  process. Each operation requires has an expected $O(\alpha(\ell))$ time,
  which yields the claimed value.

  Hence, at any given instant, we can have several structures in memory. In
  fact we can have at most four, which serve the following purporses:
  \begin{itemize}
  \item One active structure. This structure is the only one that is
    currently active, meaning that it is the only structure that still
    supports \texttt{Mark} and \texttt{Value} commands.
  \item Two static full structures that are currently being merged.
  \item One destination structure that will store the result of the merged
    structures. In general this structure is in some inconsistent state and
    does not process \texttt{Query} commands. The only command that it
    accepts is \texttt{Close}.
  \end{itemize}
  At any point of the execution some or all of the previous structures may
  be in memory. The only one that is always guaranteed to exist is the
  active structure. Now let us discuss how to process commands with these
  structures.
  \begin{itemize}
  \item The \texttt{Query} command is processed by all structures, except
    the destination structure which is potentially inconsistent. From the
    three possible values we return the overall minimum. In this case we
    are assuming that if the query position $i$ is smaller than the minimum
    position index stored in the structure than it returns its minimum
    value, i.e., the value above the $-\infty$ sentinel.
  \item The \texttt{Mark} and \texttt{Value} commands modify only the
    active structure.
  \item The \texttt{Close} command is applied to all the structures,
    including the destination structure. This causes no conflict or
    inconsistency. Recall that elements are not removed from the
    hashes, they are only marked as inactive.
  \end{itemize}

  If we have only the active structure in memory, we use it to process the
  \texttt{Mark} and \texttt{Value} commands. When this active structure
  gets full we mark it as static and ask for a new structure that supports
  the same number $a$ of elements. This structure becomes the new active
  structure. Note that requesting memory may require $O(a)$ time, assuming
  we need to clean it. This can be mitigated by using approaches
  as~\citet*{Briggs_1993} or assuming that this process was
  previously executed, which is possible with in our approach.

  As soon as the second structure becomes full we start the merging process
  to a new destination structure. We consult the number of active elements
  in each one, $c_1$ and $c_2$. We request the destination structure to
  support exactly $c_1+c_2$ elements. This implies that once the merge
  procedure is over the destination structure is full and no further
  elements can be inserted into it. At which point we need to request
  another active structure. If the full structures have sizes $a_1$ and
  $a_2$ we ask for an active structure that can support $(a_1+a_2)/2$
  elements. As argued above this active structure only gets full after the
  merging process finishes. At that point the original full structures can
  be discarded and again we have two full structures, the result of the
  previous merger and the filled up active structure. At this point we
  repeat the process.

  The reason to have a division by $2$ associated with $a_1+a_2$ is that
  its iteration yields a geometric series that does not exceed $2
  \ell$. Hence implying that none of the structures need to support more
  that $2\ell$ elements. This can also be verified by induction. Assuming
  that the original alloc size $a$ is also less than $2\ell$, we have by
  induction hypothesis that $a_1 \leq 2 \ell$ and $a_2 \leq 2\ell$
  therefore $(a_1+a_2)/2 \leq (2 \ell+2 \ell)/2 \leq 2 \ell$. Also by the
  definition of $\ell$ we also have that $c_1 \leq \ell$ and
  $c_2 \leq \ell$ which implies that the destination structures also
  support at most $2\ell$ elements. Since the algorithm uses a at most $4$
  structures simultaneously, we can thus conclude that the overall space
  requirements of the procedure is $O(\ell)$.
\end{proof}
Note that in the worst case the time bound of the UF structures is
$O(\log \ell)$ rather than $O(\alpha(\ell))$. Also note that using a strict
worst case analysis would yield an $O(\ell)$ worst case time for our
complete data structure. Because it contains a hash-table. To avoid this
pathological analysis we instead consider a high probability upper
bound. In this context we obtain an $O(\log \ell)$ time bound with high
probability, for all commands except the \texttt{Value} command. Hence let
us now address this command.
\begin{theorem}
  It is possible to process, in real time, a sequence of RMQ commands in
  $O(\ell)$ space and in $O(\log \ell)$ time per operation with high
  probability.
  \label{teo:high}
\end{theorem}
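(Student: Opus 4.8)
The plan is to reuse the de-amortized machinery of Lemma~\ref{lemma:transfer} and to argue that, once the transfer overhead has been spread out, every command runs in $O(\log \ell)$ time with high probability. First I would dispatch the commands that are already well behaved. The hash table is the only randomized component, and with a load factor bounded by $50\%$ the longest chain (equivalently the longest probe sequence) among the $O(\ell)$ stored keys is $O(\log \ell)$ with high probability; hence every hash access costs $O(\log \ell)$ whp. A single \texttt{Find} or \texttt{Union} on a union--find structure built with union by rank costs $O(\log \ell)$ in the worst case. Combining these, \texttt{Close}, \texttt{Query} and both branches of \texttt{Mark} each touch only a constant number of structures and perform a constant number of hash, stack and union--find primitives, so each is $O(\log \ell)$ whp; the per-command share of the transfer work is $O(1)$ primitives by Lemma~\ref{lemma:transfer}, which is likewise $O(\log \ell)$ whp.

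The sole obstruction is the \texttt{Value} command, whose \textbf{while} loop in Algorithm~\ref{alg:value} may execute $\Theta(\ell)$ iterations, each performing a \texttt{Pop} and a \texttt{Union}; a naive count gives $\Theta(\ell \log \ell)$ worst case for one command. I would remove the linear scan in two moves. For the stack itself, Lemma~\ref{lemma:sortS1} tells us that $S$ is strictly increasing, so instead of popping one element at a time I would binary-search for the largest index $k'$ with $S[k'] < v$; the stack has size $O(\ell)$, so this is $O(\log \ell)$, after which truncating to level $k'+1$ and setting $S[k'+1] \gets v$ take $O(1)$ on the array.

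For the position sets, the key is Lemma~\ref{lemma:sortP}: the sets $P[1], P[2], \ldots$ partition the active positions into consecutive blocks in sorted order, and the collapse triggered by a \texttt{Value} command always merges a \emph{suffix} of these blocks. I would therefore maintain the position-to-stack mapping in an order-aware structure --- a balanced search tree keyed by position whose nodes are the block left-endpoints, each endpoint tagged with the value of its block --- so that \texttt{Find} becomes a predecessor query in $O(\log \ell)$, a \texttt{Mark} that creates a new block inserts one endpoint, a \texttt{Mark} that extends the top block does nothing beyond the hash bookkeeping (the new, larger position is absorbed by its predecessor), and the batch merge of a \texttt{Value} command becomes a single range deletion: split the tree at $L[k'+1]$, discard the upper part, and retag the surviving top endpoint with value $v$. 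A split in a balanced tree is $O(\log \ell)$ and discarding a subtree is $O(1)$, so \texttt{Value} also runs in $O(\log \ell)$.

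The step I expect to be the main obstacle is justifying that this range operation faithfully realizes the many \texttt{Union} calls it replaces while preserving the invariants of Section~\ref{sec:correctness}: I must check that the endpoint tags stay monotone (so the tree really encodes the partition of Lemma~\ref{lemma:sortP}) and that Theorem~\ref{teo:correct} still certifies $S[\texttt{Find}(i)] = \mathrm{RMQ}(i,j)$ after a split-and-discard. A secondary point is the space accounting: discarding a subtree drops pointers without freeing memory, so I would argue that at most $O(\ell)$ nodes are allocated between two consecutive transfers and that each transfer rebuilds from the active elements, reclaiming the discarded nodes and keeping the footprint at $O(\ell)$. Finally, since the remaining randomness lives only in the hash --- retained for the membership tests of the transfer in Lemma~\ref{lemma:transfer} --- the overall per-operation bound is $O(\log \ell)$ with high probability, as claimed.
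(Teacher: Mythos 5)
Your proposal is correct and follows essentially the same route as the paper: de-amortize the transfers via Lemma~\ref{lemma:transfer}, replace the pop loop of \texttt{Value} by a binary search over the sorted stack (Lemma~\ref{lemma:sortS1}), and use Lemma~\ref{lemma:sortP} to represent each position set by its minimum element so that \texttt{Find} becomes a predecessor search and the suffix merge of a \texttt{Value} command becomes a single batch union. The only difference is cosmetic: the paper stores the minima directly inside the stack array items, so the balanced search tree you introduce (with its split-and-discard machinery) is unnecessary --- truncating the top index $k$ already realizes the batch union, and the predecessor search is just a binary search over that array.
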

\begin{proof}
  Given the previous observations we can account $O(\log \ell)$ time for
  the UF structure and the hash table, with high probability,
  see~\citet{mitzenmacher2017probability}.  Lemma~\ref{lemma:transfer}
  de-amortized the transfer operation, hence in this proof we only need to
  explain how to de-amortize the \texttt{Value} operation.

  Algorithm~\ref{alg:value} specifies that given an argument $v$ this
  procedure removes from the stack $S$ the elements that are strictly
  larger than $v$. This process may end up removing all the elements from
  the stack, except obviously the $-\infty$ sentinel. Hence its worst case
  time is $O(m)$, where $m$ is the maximum number of elements in the
  stack. The transfer procedure guarantees that the stack does not
  accumulate deactivated items and therefore we have that $m =
  O(\ell)$. This is still too much time for a real time operation. Instead
  we can replace this procedure by a binary search over $S$, i.e., we
  assume that stack is implemented on an array and therefore we have direct
  access to its elements in constant time. As shown in
  Lemma~\ref{lemma:sortS1} the elements of $S$ are sorted. Therefore we can
  compute a binary search for the position of $v$ and discard all the
  elements in $S$ that are larger than $v$ in $O(\log \ell)$ time. Recall
  that we use variable $k$ to indicate the top of the stack. Once the
  necessary position is identified we update $k$.

  However Algorithm~\ref{alg:value} also specifies that each element that
  is removed from the stack invokes a \texttt{Union} operation,
  line~\ref{line:union}. To perform these unions in real time we need a
  different UF data structure.

  Most UF structures work by choosing a representative element for each
  set. The representative is the element that is returned by the
  \texttt{Find} operation. This representative is usually an element of the
  set it represents. The representative either posseses, or is assigned,
  some distinct feature that makes it easy to identify. In the UF structure
  by~\citet*{Tarjan_1984} a representative is stored at the root of a tree.

  Lemma~\ref{lemma:sortP} essentially states that the sets that we are
  interrested in can be sorted, without incosistencies among elements of
  diferent sets. Hence this provides a natural way for choosing a
  representative. Each set can be represented by its minimum element. With
  this representation the $\mathtt{Find}(p)$ operation consists in finding
  the largest representative that is still less than or equal to $p$, i.e.,
  the Predecessor. The Union operation simply discards the largest
  representative and keeps the smallest one. Hence we do not require an
  extra data structure, it is enough to store the minimums along with
  values within the stack items. To compute the Predecessors we perform a
  binary search over the minimums. This process requires $O(\log \ell)$
  time. Moreover the variable $k$ allows us to perform multiple
  \texttt{Union} operations at once. Let us illustrate how to use this data
  structure for our goals. Recall the sample command sequence:

\begin{verbatim}
  V 22 M V 23 M V 26 M V 28 M V 32 M V 27 M V 35 M Q 4 C 3
\end{verbatim}

  Now assume that after this sequence we also execute the command
  \verb|V 10|. We illustrate how a representation based on minimums
  processes these commands, Figure~\ref{fig:stack3}. The structure on left
  is the configuration after the initial sequence of commands. The
  structure in the middle represents the actual configuration that is
  stored in memory. Note that for each set we store only its minimum
  element. In particular note that the set associated with value $26$ is
  represented by $3$, even though position $3$ was already marked as
  closed. As mentioned the hash-table keeps track of which positions are
  still open and closed positions are removed during transfer
  operations. This means that until then it is necessary to use all
  positions, closed or not, for our UF data structure. Hence the
  representative of a set is the minimum over all positions that are
  related to the set, closed or not. The structure on the right represents
  the structure after processing the \verb|V 10| command.

\renewcommand{\sItem}[1]{
  \psframebox[framearc=0.7,fillstyle=solid]{\bhBox{10pt}{24pt}{$#1$}}
}
\def\sbItem#1{
  \psframebox[framearc=0.7,fillstyle=solid,fillcolor=black!50]{\bhBox{10pt}{24pt}{
      \textcolor{white}{$\mathbf{#1}$}}}
}

  \begin{figure}[tb]
  \centering
    \begin{pspicture}(12.0,5)
      \rput[bl](0,0){
        \psframebox*[fillstyle=solid,fillcolor=black!4]{
          \begin{psmatrix}[mnode=r,rowsep=5pt,colsep=9pt,mcol=l]
            \sItem{35} & $\{7\}$ \\
            \sItem{27} & $\{4,5,6\}$ \\
            \sItem{26} & $\emptyset$ \\
            \sItem{23} & $\{2\}$ \\
            \sItem{22} & $\{1\}$ \\
            \sItem{-\infty}
          \end{psmatrix}}}
      \rput[bl](4.0,0){
        \psframebox*[fillstyle=solid,fillcolor=black!4]{
          \begin{psmatrix}[mnode=r,rowsep=5pt,colsep=9pt,mcol=l]
            \sItem{35} & $7$ \\
            \sItem{27} & $4$ \\
            \sItem{26} & $3$ \\
            \sItem{23} & $2$ \\
            \sItem{22} & $1$ \\
            \sItem{-\infty}
          \end{psmatrix}}}
      \rput[bl](8.2,0){
        \psframebox*[fillstyle=solid,fillcolor=black!4]{
          \begin{psmatrix}[mnode=r,rowsep=5pt,colsep=9pt,mcol=l]
            \sItem{10} & $1$ \\
            \sItem{-\infty}
          \end{psmatrix}}}
    \end{pspicture}
    \caption[Stack configuration]{Illustration structure configuration
      using minimums to represent position sets.}
  \label{fig:stack3}
\end{figure}
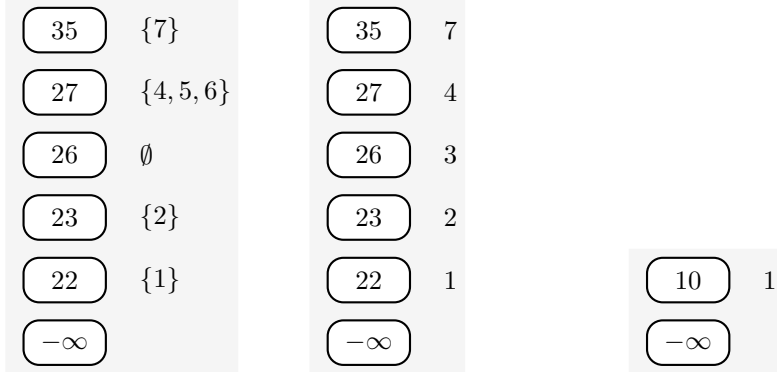
Note that in this final configuration the set, of active positions,
associated with value $10$ should be $\{1,2,4,5,6,7\}$. However it is
represented only by the value $1$. This set should be obtained by the
following sequence of \texttt{Union} operations
$\{1\} \cup \{2\} \cup \{4,5,6\} \cup \{7\}$. This amounts to removing the
numbers $2$, $4$ and $7$, which is obtained automatically when we alter the
variable $k$.

Summing up, our data structure consists of the following elements:
  \begin{itemize}
  \item An array storing stack S. Each element in the stack contains a
    value $v$ and position $i$, which is the minimum of the position set it
    represents.
  \item A hash-table to identify the active positions. In this
    configuration no mapping is required, it is enough to identify the
    active positions.
  \end{itemize}

  The general procedure for executing commands and the respective time
  bounds are the following:
  \begin{itemize}
  \item The \texttt{Value} command needs to truncate the stack, by updating
    variable $k$. This process requires $O(\log \ell)$ time because of the
    binary search procedure, but it can actually be improved to
    $O(1+\log d)$ time where $d$ is the number of positions removed from
    the position tree, by using an exponential search that starts at the
    top of the stack. Using an exponential search the expected
    amortized time of this operation is $O(1)$.
  \item The \texttt{Mark} command needs to add an element to the hash-table
    and an element to the stack $S$. This requires $O(\log \ell)$ time with
    high probability. The \texttt{Make-Set} or \texttt{Union} operations
    require only $O(1)$ time hence the overall time is dominated by
    $O(\log \ell)$. The expected time of this operation is $O(1)$.
  \item The \texttt{Query} command needs to search for an element in the
    hash-table and compute a \texttt{Find} operation. The \texttt{Find}
    operation is computed with a binary search over minimums stored in the
    items of the stack. This operation requires $O(\log \ell)$ time with
    high probability. The expected amortized time is also $O(\log \ell)$,
    but it can be improved to $O(1+\log(j-i+1))$ for a query with indexes
    $(i,j)$, by using an exponential search from the top of the stack.
  \item The \texttt{Close} command needs to remove an element from the
    hash-table. This requires $O(\log \ell)$ time with high probability and
    $O(1)$ expected time.
  \end{itemize}
\end{proof}

The data structure of the previous theorem is simple because most of the
complex de-amortizing procedure is handled in
Lemma~\ref{lemma:transfer}. We now focus on how to further reduce the high
probability time bounds to $O(\log \log n)$. A simple way to obtain this is
to have $\ell = O(\log n)$, i.e., having at most $O(\log n)$ active
positions at each time. This may be achieved if \texttt{Query} positions
are not necessarily exact, meaning that the data structure actually returns
the solution for a query $(i',j)$ instead of $(i,j)$. The goal is that
$j-i$ is similar in size of $j-i'$. Meaning that $j-i \leq j-i' <
2(j-i)$. In this scenario it is enough to keep $O(\log n)$ active
positions, i.e., positions $i'$ for which $j - i' = 2^c$ for some integer
$c$. Since the data structure of Theorem~\ref{teo:high} does not use the
hash-table to reduce the position range, we can bypass its use in these
queries. It is enough to directly determine the predecessor of $i$ among
the minimums stored in the stack $S$. Which is computed with a binary
search or exponential search as explained in the proof.

The problem with this specific set of positions is that when $j$ increases
the active positions no longer provide exact powers of two. This is not
critical because we can adopt an update procedure that provides similar
results. Let $i_1 < i_2 < i_3$ represent three consecutive positions that
are currently active. When $j$ increases we check whether to keep $i_2$ or
discard it. It is kept if $j - i_1 > 2 (j - i_3)$, otherwise it is
discarded. Hence we keep a list of active positions that gets updated by
adding the new position $j$ and checking two triples of active
positions. We keep an index that indicates which triple to check and at
each step use it to check two triples, moving from smaller to larger
position values. The extremes of the list are not checked. We show the
resulting list of positions in Table~\ref{tab:aplists}, where the bold
numbers indicate the triples that will be checked in the next
iteration. Whenever the triples to check reach the end of the list we have
that the size of the list is at most $2 \log_2 n$, because the verification
guarantees that the value $j-i$ is divided in half for every other position
$i$. Therefore it takes at most $2 \log_2 n$ steps to traverse the
list. Hence this list can contain at most $4 \log_2 n = O(\log n)$
positions and each time $j$ is updated only $O(1)$ time is used.
\begin{table}[tb]
  \centering
  \begin{tabular}{*{11}{r}}
\textbf{1} & \\
\textbf{1} & \textbf{2} & \\
\textbf{1} & \textbf{2} & \textbf{3} & \\
\textbf{1} & \textbf{2} & \textbf{3} & 4 & \\
1 & \textbf{3} & \textbf{4} & \textbf{5} & \\
\textbf{1} & \textbf{3} & \textbf{4} & 5 & 6 & \\
1 & \textbf{4} & \textbf{5} & \textbf{6} & 7 & \\
1 & 4 & \textbf{6} & \textbf{7} & \textbf{8} & \\
\textbf{1} & \textbf{4} & \textbf{6} & 7 & 8 & 9 & \\
1 & \textbf{4} & \textbf{7} & \textbf{8} & 9 & 10 & \\
1 & 4 & \textbf{7} & \textbf{9} & \textbf{10} & 11 & \\
1 & 4 & 7 & 9 & \textbf{10} & \textbf{11} & \textbf{12} & \\
\textbf{1} & \textbf{4} & \textbf{7} & 9 & 10 & 11 & 12 & 13 & \\
1 & \textbf{7} & \textbf{9} & \textbf{10} & 11 & 12 & 13 & 14 & \\
1 & \textbf{7} & \textbf{11} & \textbf{12} & 13 & 14 & 15 & \\
1 & 7 & \textbf{11} & \textbf{13} & \textbf{14} & 15 & 16 & \\
1 & 7 & 11 & \textbf{14} & \textbf{15} & \textbf{16} & 17 & \\
1 & 7 & 11 & 14 & \textbf{16} & \textbf{17} & \textbf{18} & \\
\textbf{1} & \textbf{7} & \textbf{11} & 14 & 16 & 17 & 18 & 19 & \\
1 & 7 & \textbf{11} & \textbf{14} & \textbf{16} & 17 & 18 & 19 & 20 & \\
1 & 7 & 11 & \textbf{16} & \textbf{17} & \textbf{18} & 19 & 20 & 21 & \\
1 & 7 & 11 & \textbf{16} & \textbf{19} & \textbf{20} & 21 & 22 & \\
1 & 7 & 11 & 16 & \textbf{19} & \textbf{21} & \textbf{22} & 23 & \\
1 & 7 & 11 & 16 & 19 & 21 & \textbf{22} & \textbf{23} & \textbf{24} & \\
\textbf{1} & \textbf{7} & \textbf{11} & 16 & 19 & 21 & 22 & 23 & 24 & 25 & \\
1 & \textbf{11} & \textbf{16} & \textbf{19} & 21 & 22 & 23 & 24 & 25 & 26 & \\
1 & 11 & \textbf{19} & \textbf{21} & \textbf{22} & 23 & 24 & 25 & 26 & 27 & \\
1 & 11 & \textbf{19} & \textbf{23} & \textbf{24} & 25 & 26 & 27 & 28 & \\
1 & 11 & 19 & \textbf{24} & \textbf{25} & \textbf{26} & 27 & 28 & 29 & \\
1 & 11 & 19 & \textbf{24} & \textbf{27} & \textbf{28} & 29 & 30 & \\
  \end{tabular}
  \caption{Sequence of active position lists}
  \label{tab:aplists}
\end{table}

Another alternative for obtaining $O(\log \log n)$ high probability time is
to change the UF structure. In this case we use the same approach as
Theorem~\ref{teo:high} that relies on predecessor searches to compute the
\texttt{Find} operation. This time we consider the Van Emde Boas tree that
supports this operation efficiently, but requires longer to update.

\begin{theorem}
  It is possible to process, in real time, a sequence of RMQ commands in
  $O(\ell)$ space and in $O(\log \log \ell)$ time with high
  probability, for all operations except \texttt{Value}, which requires
  $O(\sqrt{\ell})$ time with high probability.
  \label{teo:real}
\end{theorem}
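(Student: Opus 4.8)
The plan is to keep the entire real-time scaffolding of Lemma~\ref{lemma:transfer} and the minimum-representative encoding of Theorem~\ref{teo:high} intact, and to change only the mechanism that implements \texttt{Find}, i.e.\ the predecessor search over the set of representatives. In Theorem~\ref{teo:high} this was a binary (or exponential) search over the minimums stored in the stack, costing $O(\log\ell)$. Here I would instead store the set of representatives in a van Emde Boas tree (realised as a hashed, linear-space y-fast trie so that space stays $O(\ell)$ and the bounds hold with high probability), over a universe reduced to size $O(\ell)$ using the reduced-position indices that the transfer procedure already produces. With such a structure, a single predecessor query, a single insertion, and a single deletion each cost $O(\log\log\ell)$ with high probability. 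I would inherit verbatim the four-live-structures framework: \texttt{Query} is answered by each consistent structure and we return the overall minimum, \texttt{Mark} and \texttt{Value} act only on the active structure, and \texttt{Close} marks positions inactive in all of them.

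With this substitution the three ``easy'' commands are immediate. \texttt{Query} performs one predecessor search and returns the associated stack value, so it costs $O(\log\log\ell)$ whp; \texttt{Mark} pushes one new (and largest) representative, a single vEB insertion, again $O(\log\log\ell)$ whp; and \texttt{Close} marks or removes a single element, $O(\log\log\ell)$ whp. In each case only $O(1)$ elements of the vEB are touched, so the per-command bound is exactly the cost of one vEB operation, and the combination over the at most four live structures only multiplies by a constant.

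The delicate command is \texttt{Value}, and this is where I expect the main obstacle to lie. As in Theorem~\ref{teo:high}, \texttt{Value}$(v)$ finds the cut by a predecessor search (cost $O(\log\log\ell)$) and truncates the stack by resetting $k$; the new difficulty is that the popped representatives must now also be removed from the vEB, and a single \texttt{Value} may pop $\Theta(\ell)$ of them. Deleting them one at a time would cost $\Theta(\ell\log\log\ell)$, violating the real-time requirement, so the crux is a \emph{bulk} deletion. Here I would exploit Lemmas~\ref{lemma:sortS1} and~\ref{lemma:sortP}: the representatives to be removed are exactly the largest ones, i.e.\ they form the top suffix of the sorted set. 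Using the top-level decomposition of the vEB into $O(\sqrt\ell)$ clusters over a universe of size $O(\ell)$, I would clear every cluster lying above the cluster that contains the cut point (orphaning their memory rather than dismantling it element by element), erase the corresponding suffix of the summary, and recurse into the single boundary cluster. The dominant term is scanning the $O(\sqrt\ell)$ top-level clusters, which yields the claimed $O(\sqrt\ell)$ bound whp; one then verifies that the orphaned clusters do not break the $O(\ell)$ space guarantee, which is exactly what the periodic transfer/compaction of Lemma~\ref{lemma:transfer} ensures, since deleted representatives never return and the next transfer reclaims their space.

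Finally I would collect the pieces: the high-probability guarantees follow from the hashed linear-space vEB and the $50\%$-load hash table already used in Theorem~\ref{teo:high}, the reduction of the universe to $O(\ell)$ places $\ell$ (rather than $n$) inside the doubly-logarithmic term, and the de-amortized transfer, the four simultaneous structures, and the treatment of \texttt{Close} are carried over unchanged from Lemma~\ref{lemma:transfer}. The only genuinely new ingredient is the $O(\sqrt\ell)$ suffix deletion inside \texttt{Value}, which is precisely the operation that prevents the uniform $O(\log\log\ell)$ bound and isolates \texttt{Value} as the exceptional command in the statement.
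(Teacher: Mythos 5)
Your overall route is the same as the paper's: reduce the universe to $O(\ell)$ via the hash table, store the minimum representatives in a van Emde Boas tree so that \texttt{Find} becomes a predecessor query, and handle \texttt{Value} with a bulk suffix deletion that clears the $O(\sqrt{\ell})$ top-level clusters, truncates the summary, and recurses into the boundary cluster --- your recursion is exactly the paper's $T(u)=2^{u/2}+2T(u/2)=O(\sqrt{\ell})$. However, there is one genuine gap: you claim the high-probability guarantees follow from ``the $50\%$-load hash table already used in Theorem~\ref{teo:high}.'' That table (separate chaining, or linear probing in the implementation) only guarantees $O(\log\ell)$ per operation with high probability --- that is precisely the bound stated in Theorem~\ref{teo:high} --- and \texttt{Mark}, \texttt{Close} and \texttt{Query} all touch it. Carried over unchanged, those commands remain at $O(\log\ell)$ whp and the claimed $O(\log\log\ell)$ bound fails. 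The paper closes this hole by replacing chaining with the 2-way chaining scheme of \citet*{Azar_1999} (still at $50\%$ load), whose maximum bucket size is $O(\log\log\ell)$ with high probability; your proof needs this substitution or an equivalent one.

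Two smaller issues. First, your parenthetical choice to realise the predecessor structure as a y-fast trie is self-defeating: the y-fast trie's $O(\log\log n)$ bound is only \emph{amortized} (because of bucket rebuilding), which is exactly why the paper declines to use it in this real-time setting; once the universe is hashed down to size $O(\ell)$, a plain vEB tree already gives $O(\ell)$ space and worst-case query/update bounds, so the y-fast trie should simply be dropped. Second, the cut point in \texttt{Value} is determined by the \emph{value} $v$, not by a position, so the position-keyed vEB cannot locate it in $O(\log\log\ell)$; the paper instead keeps the $O(\log\ell)$ binary search over the stack values, which is harmless here only because it is dominated by the $O(\sqrt{\ell})$ deletion cost that \texttt{Value} is allowed anyway.
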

\begin{proof}
  First note that the \texttt{Value} command is not used in the
  de-amortized transfer procedure described in
  Lemma~\ref{lemma:transfer}. Thus guaranteeing that the overhead per
  command will be only $O(\log \log \ell)$ time, once the statement of the
  Theorem is established.

  One important consideration is to reduce the high probability time of the
  hash-table to $O(\log \log \ell)$ instead of $O(\log \ell)$. For this
  goal we modify the separate chaining to the 2-way chaining approach
  proposed by~\citet*{Azar_1999}, also with a maximum load factor of 50\%.

  We can now analyze the Van Emde Boas tree (VEB). This data structure is
  used as in Theorem~\ref{teo:high} to store the minimum values of each
  set. Hence the underlying universe are the positions over $A$. Since this
  structure uses linear space in the universe size this would yield $O(n)$
  space. However in this case we can use the hash-table to reduce the
  position range and thus the required space becomes $O(\ell)$. Note that
  the reduced positions are also integers and we can thus correctly use
  this data structure.

  Given that the time to compute a predecessor with this data structure is
  $O(\log \log \ell)$ this then implies this bound for the RMQ operations
  except \texttt{Value}. For this operation we have two caveats. First the
  binary search over the values in the stack $S$ still requires
  $O(\log \ell)$ time. Second the \texttt{Union} operations in
  Algorithm~\ref{alg:value} implies that it is necessary to remove elements
  from the VEB tree. This is not a problem for the \texttt{Mark} operation,
  Algorithm~\ref{alg:Mark}, because a single removal in this tree also
  requires $O(\log \log \ell)$ time. The issue for \texttt{Value} is that it
  may perform several such operations. In particular when $d$ elements are
  removed from the stack it requires $O(d \log \log \ell)$ time. Recall the
  example in the proof of Theorem~\ref{teo:high}, where several union
  operations where executed to produce the set
  $\{1\} \cup \{2\} \cup \emptyset \cup \{4,5,6\} \cup \{7\}$. In that
  Theorem this was done automatically by modifying $k$, but in this case it
  is necessary to actually remove the elements $2$, $3$, $4$ and $7$ from
  the VEB tree. Note that the element $3$ is the representative of the
  empty set. Even though it is not active it was still in the VEB tree.

  This consists in removing from the VEB tree all the elements that are
  larger than $1$. The VEB tree does not have a native operation for this
  process. Hence we have thus far assumed that this was obtained by
  iterating the delete operation. Still it is possible to implement this
  bulk delete operation directly within the structure, much like it can be
  done over binary search trees. In essence the procedure is to directly
  mark the necessary first level structures as empty and then do a double
  recursion, which is usually strictly avoided in this data
  structure. Given a variable $u$ that identifies the logarithm of the
  universe size as $\ell = 2^u$, this yields the following time recursion
  $T(u) = 2^{u/2} + 2 T(u/2)$. Note that $2^{u/2} = \sqrt{\ell}$ is the
  number of structures that exist in the first level, and potentially need
  to be modified. This recursion is bounded by
  $O(2^{u/2}) = O(\sqrt{\ell})$.
\end{proof}

As a final remark about this last result note that the time bound for the
\texttt{Value} command is also $O(\log \log \ell)$ amortized, only the high
probability bound is $O(\sqrt{\ell})$. This is because the iterated
deletion bound $O(d \log \log \ell)$ that we mentioned in the proof does
amortize to $O(\log \log \ell)$ and for each instance of the \texttt{Value}
command we can choose between $O(d \log \log \ell)$ and $O(\sqrt{\ell})$.

This closes the theoretical analysis of the data structure. Further
discussion is given in Section~\ref{sec:disc-concl}.
\section{Experimental}
\label{sec:experimental}
Let us now focus on testing the performance of this structure
experimentally. We implemented the data structure that is described in
Theorem~\ref{teo:basic}. We also designed a generator that produces random
sequences of RMQ commands. In these generated sequences the array $A$
contained $2^{28}$ integers, i.e., $n = 2^{28}$. Each integer was chosen
uniformly between $0$ and $2^{30}-1$, with the \texttt{arc4random\_uniform}
function\footnote{\url{https://github.com/freedesktop/libbsd}}.

We first implemented the version of our Algorithm described in
Section~\ref{sec:solution}, i.e., without using a hash table nor the
transfer process. We refer to the prototype as the vanilla version and use
the letter V to refer to it in our tables. We also implemented the version
described in Theorem~\ref{teo:basic}, which includes a hash table and
requires a transfer process. We use the label T2 to refer to this
prototype.

For a baseline comparison we used the \texttt{ST-RMQ-CON} algorithm
by~\citet*{Alzamel_2018}. We obtained the implementation from their github
repository \url{https://github.com/solonas13/rmqo}.

Our RMQ command sequence generator proceeds as follows. First it generates
$n = 2^{28}$ integers uniformly between $0$ and $2^{30}-1$. Then it chooses
a position to \texttt{Mark}, uniformly among the $n$ positions
available. This process is repeated $q$ times. Note that the choices are
made with repetition, therefore the same position can be chosen several
times. Each marked position in turn will force a query command. All query
intervals have the same length $l = j-i+1$. Under these conditions it is
easy to verify that the expected number of open positions at a given time
is $l\times q/n$ and the actual number should be highly concentrated around
this value. Hence we assume that this value corresponds to our $\ell$
parameter and therefore determine $l$ as $\ell \times n /q$.

The tests were performed on a $64$ bit machine, running \texttt{Linux mem
  4.19.0-12}, which contained 32 cores in \texttt{Intel(R) Xeon(R) CPU E7-
  4830 @ 2.13GHz} CPUs. The system has $256 Gb$ of RAM and of swap. Our
prototypes were compile with \texttt{gcc 8.3.0} and the baseline prototype
with \texttt{g++}. All prototypes are compiled with \texttt{-O3}. We
measure the average execution time by command and the peak memory used by
the prototypes. These values were both obtained with the system
\texttt{time} command. These results are show in table~\ref{tab:time}
and~\ref{tab:space}.
\begin{table}
  \begin{adjustbox}{max width=\textwidth}
    \begin{tabular}{l|l|*{17}{r}l}
      && $2^{10}$& $2^{11}$& $2^{12}$& $2^{13}$& $2^{14}$& $2^{15}$& $2^{16}$& $2^{17}$& $2^{18}$& $2^{19}$& $2^{20}$& $2^{21}$& $2^{22}$& $2^{23}$& $2^{24}$& $2^{25}$& $2^{26}$& $\leftarrow \ell$ \\ \hline
$2^{10}$ & T2 & 14 \\
& V & 10 \\
& B & 25 \\ \hline
$2^{11}$ & T2 & 15 & 14 \\
& V & 10 & 10 \\
& B & 20 & 20 \\ \hline
$2^{12}$ & T2 & 15 & 15 & 14 \\
& V & 10 & 10 & 10 \\
& B & 21 & 20 & 20 \\ \hline
$2^{13}$ & T2 & 15 & 14 & 15 & 14 \\
& V & 10 & 10 & 11 & 10 \\
& B & 20 & 20 & 20 & 21 \\ \hline
$2^{14}$ & T2 & 15 & 14 & 15 & 14 & 14 \\
& V & 10 & 10 & 10 & 10 & 10 \\
& B & 21 & 21 & 21 & 20 & 20 \\ \hline
$2^{15}$ & T2 & 15 & 16 & 16 & 16 & 16 & 16 \\
& V & 11 & 11 & 10 & 10 & 10 & 10 \\
& B & 21 & 27 & 27 & 27 & 27 & 24 \\ \hline
$2^{16}$ & T2 & 15 & 15 & 14 & 14 & 15 & 15 & 15 \\
& V & 10 & 10 & 10 & 10 & 10 & 10 & 10 \\
& B & 26 & 26 & 27 & 26 & 26 & 26 & 25 \\ \hline
$2^{17}$ & T2 & 23 & 25 & 14 & 14 & 15 & 14 & 19 & 15 \\
& V & 10 & 10 & 10 & 10 & 10 & 10 & 10 & 10 \\
& B & 26 & 27 & 27 & 26 & 26 & 25 & 27 & 26 \\ \hline
$2^{18}$ & T2 & 15 & 16 & 15 & 14 & 15 & 16 & 15 & 14 & 15 \\
& V & 11 & 10 & 11 & 11 & 11 & 10 & 10 & 11 & 10 \\
& B & 28 & 25 & 28 & 27 & 27 & 28 & 27 & 27 & 26 \\ \hline
$2^{19}$ & T2 & 15 & 16 & 17 & 16 & 15 & 16 & 16 & 17 & 16 & 17 \\
& V & 10 & 10 & 11 & 10 & 11 & 10 & 11 & 11 & 11 & 11 \\
& B & 29 & 30 & 26 & 30 & 29 & 29 & 30 & 29 & 28 & 25 \\ \hline
$2^{20}$ & T2 & 17 & 18 & 17 & 17 & 18 & 17 & 18 & 19 & 18 & 20 & 19 \\
& V & 13 & 12 & 11 & 11 & 12 & 12 & 12 & 13 & 13 & 12 & 12 \\
& B & 32 & 33 & 33 & 33 & 33 & 33 & 33 & 32 & 33 & 28 & 31 \\ \hline
$2^{21}$ & T2 & 19 & 19 & 19 & 19 & 19 & 29 & 20 & 21 & 23 & 22 & 24 & 23 \\
& V & 14 & 14 & 13 & 14 & 14 & 15 & 14 & 15 & 14 & 14 & 15 & 14 \\
& B & 38 & 38 & 38 & 37 & 38 & 39 & 39 & 39 & 40 & 38 & 38 & 35 \\ \hline
$2^{22}$ & T2 & 24 & 24 & 24 & 24 & 25 & 27 & 25 & 27 & 31 & 30 & 32 & 33 & 33 \\
& V & 18 & 17 & 17 & 17 & 17 & 20 & 19 & 19 & 20 & 20 & 19 & 20 & 22 \\
& B & 49 & 49 & 50 & 50 & 50 & 50 & 52 & 51 & 50 & 52 & 51 & 49 & 44 \\ \hline
$2^{23}$ & T2 & 33 & 34 & 33 & 37 & 35 & 40 & 37 & 38 & 42 & 53 & 48 & 50 & 52 & 51 \\
& V & 24 & 25 & 24 & 28 & 27 & 27 & 31 & 31 & 32 & 33 & 32 & 32 & 30 & 35 \\
& B & 65 & 66 & 69 & 71 & 67 & 76 & 78 & 86 & 83 & 89 & 79 & 75 & 74 & 64 \\ \hline
$2^{24}$ & T2 & 51 & 49 & 48 & 50 & 50 & 51 & 52 & 57 & 59 & 67 & 75 & 83 & 85 & 91 & 86 \\
& V & 41 & 41 & 40 & 41 & 37 & 38 & 40 & 45 & 45 & 47 & 46 & 45 & 47 & 49 & 54 \\
& B & 106 & 107 & 115 & 107 & 107 & 114 & 115 & 132 & 126 & 127 & 130 & 135 & 134 & 119 & 114 \\ \hline
$2^{25}$ & T2 & 71 & 72 & 73 & 75 & 75 & 79 & 82 & 82 & 91 & 110 & 124 & 143 & 153 & 162 & 159 & 160 \\
& V & 58 & 59 & 59 & 60 & 60 & 61 & 60 & 71 & 73 & 75 & 75 & 74 & 78 & 83 & 91 & 102 \\
& B & 157 & 155 & 158 & 161 & 156 & 167 & 169 & 172 & 179 & 183 & 187 & 191 & 191 & 186 & 182 & 150 \\ \hline
$2^{26}$ & T2 & 104 & 110 & 106 & 111 & 108 & 116 & 122 & 122 & 139 & 167 & 195 & 224 & 241 & 267 & 266 & 264 & 280 \\
& V & 93 & 94 & 94 & 94 & 95 & 94 & 95 & 111 & 119 & 119 & 120 & 119 & 115 & 121 & 128 & 146 & 278 \\
& B & 224 & 229 & 233 & 238 & 250 & 247 & 253 & 263 & 267 & 277 & 294 & 287 & 291 & 317 & 288 & 273 & 252 \\ \hline
$\uparrow q$

    \end{tabular}
  \end{adjustbox}
  \caption{Execution time per command in nano seconds. The values are
    obtained by dividing total execution time by $n+q$.}
  \label{tab:time}
\end{table}
\begin{table}
  \begin{adjustbox}{max width=\textwidth}
    \begin{tabular}{l|l|*{17}{r}l}
      && $2^{10}$& $2^{11}$& $2^{12}$& $2^{13}$& $2^{14}$& $2^{15}$& $2^{16}$& $2^{17}$& $2^{18}$& $2^{19}$& $2^{20}$& $2^{21}$& $2^{22}$& $2^{23}$& $2^{24}$& $2^{25}$& $2^{26}$& $\leftarrow \ell$ \\ \hline
$2^{10}$ & T2 & 2 \\
& V & 2 \\
& B & 2Gb \\ \hline
$2^{11}$ & T2 & 2 & 2 \\
& V & 2 & 2 \\
& B & 2Gb & 2Gb \\ \hline
$2^{12}$ & T2 & 2 & 2 & 2 \\
& V & 2 & 2 & 2 \\
& B & 2Gb & 2Gb & 2Gb \\ \hline
$2^{13}$ & T2 & 2 & 2 & 2 & 2 \\
& V & 2 & 2 & 2 & 2 \\
& B & 2Gb & 2Gb & 2Gb & 2Gb \\ \hline
$2^{14}$ & T2 & 2 & 2 & 2 & 3 & 3 \\
& V & 2 & 2 & 2 & 2 & 2 \\
& B & 2Gb & 2Gb & 2Gb & 2Gb & 2Gb \\ \hline
$2^{15}$ & T2 & 2 & 2 & 2 & 3 & 4 & 3 \\
& V & 2 & 2 & 2 & 2 & 2 & 2 \\
& B & 2Gb & 2Gb & 2Gb & 2Gb & 2Gb & 2Gb \\ \hline
$2^{16}$ & T2 & 2 & 2 & 2 & 3 & 4 & 6 & 5 \\
& V & 3 & 3 & 3 & 3 & 3 & 3 & 3 \\
& B & 2Gb & 2Gb & 2Gb & 2Gb & 2Gb & 2Gb & 2Gb \\ \hline
$2^{17}$ & T2 & 2 & 2 & 2 & 3 & 4 & 6 & 10 & 8 \\
& V & 4 & 4 & 4 & 4 & 4 & 4 & 4 & 4 \\
& B & 2Gb & 2Gb & 2Gb & 2Gb & 2Gb & 2Gb & 2Gb & 2Gb \\ \hline
$2^{18}$ & T2 & 2 & 2 & 2 & 3 & 5 & 7 & 10 & 14 & 14 \\
& V & 6 & 6 & 6 & 6 & 6 & 6 & 6 & 6 & 7 \\
& B & 2Gb & 2Gb & 2Gb & 2Gb & 2Gb & 2Gb & 2Gb & 2Gb & 2Gb \\ \hline
$2^{19}$ & T2 & 2 & 2 & 2 & 3 & 5 & 8 & 12 & 20 & 26 & 26 \\
& V & 11 & 11 & 11 & 11 & 11 & 11 & 11 & 11 & 11 & 11 \\
& B & 2Gb & 2Gb & 2Gb & 2Gb & 2Gb & 2Gb & 2Gb & 2Gb & 2Gb & 2Gb \\ \hline
$2^{20}$ & T2 & 2 & 2 & 2 & 3 & 5 & 8 & 12 & 22 & 34 & 50 & 50 \\
& V & 21 & 21 & 21 & 21 & 21 & 21 & 21 & 21 & 21 & 21 & 21 \\
& B & 2Gb & 2Gb & 2Gb & 2Gb & 2Gb & 2Gb & 2Gb & 2Gb & 2Gb & 2Gb & 2Gb \\ \hline
$2^{21}$ & T2 & 2 & 2 & 2 & 3 & 5 & 8 & 14 & 22 & 38 & 66 & 98 & 98 \\
& V & 41 & 41 & 41 & 41 & 41 & 41 & 42 & 41 & 41 & 41 & 42 & 41 \\
& B & 2Gb & 2Gb & 2Gb & 2Gb & 2Gb & 2Gb & 2Gb & 2Gb & 2Gb & 2Gb & 2Gb & 2Gb \\ \hline
$2^{22}$ & T2 & 2 & 2 & 2 & 4 & 5 & 8 & 15 & 25 & 42 & 82 & 130 & 194 & 194 \\
& V & 81 & 81 & 81 & 81 & 81 & 81 & 82 & 81 & 81 & 81 & 81 & 81 & 81 \\
& B & 3Gb & 3Gb & 3Gb & 3Gb & 3Gb & 3Gb & 3Gb & 3Gb & 3Gb & 3Gb & 3Gb & 3Gb & 3Gb \\ \hline
$2^{23}$ & T2 & 2 & 2 & 3 & 4 & 5 & 8 & 14 & 27 & 46 & 97 & 161 & 257 & 385 & 386 \\
& V & 161 & 161 & 161 & 161 & 161 & 161 & 161 & 162 & 161 & 161 & 161 & 161 & 161 & 162 \\
& B & 4Gb & 4Gb & 4Gb & 4Gb & 4Gb & 4Gb & 4Gb & 4Gb & 4Gb & 4Gb & 4Gb & 4Gb & 4Gb & 3Gb \\ \hline
$2^{24}$ & T2 & 3 & 2 & 3 & 3 & 5 & 9 & 15 & 27 & 53 & 90 & 160 & 318 & 510 & 766 & 770 \\
& V & 321 & 320 & 321 & 322 & 320 & 320 & 320 & 320 & 319 & 319 & 320 & 319 & 319 & 319 & 320 \\
& B & 6Gb & 6Gb & 6Gb & 6Gb & 6Gb & 6Gb & 6Gb & 6Gb & 6Gb & 5Gb & 5Gb & 5Gb & 5Gb & 5Gb & 5Gb \\ \hline
$2^{25}$ & T2 & 3 & 3 & 3 & 4 & 5 & 8 & 15 & 29 & 53 & 96 & 158 & 314 & 506 & 1011 & 1Gb & 2Gb \\
& V & 634 & 634 & 634 & 634 & 634 & 634 & 634 & 634 & 634 & 634 & 634 & 634 & 634 & 634 & 634 & 634 \\
& B & 9Gb & 9Gb & 9Gb & 9Gb & 9Gb & 9Gb & 9Gb & 9Gb & 9Gb & 9Gb & 9Gb & 9Gb & 8Gb & 8Gb & 8Gb & 7Gb \\ \hline
$2^{26}$ & T2 & 5 & 4 & 3 & 4 & 5 & 9 & 17 & 27 & 55 & 109 & 169 & 307 & 499 & 996 & 2Gb & 3Gb & 3Gb \\
& V & 1Gb & 1Gb & 1Gb & 1Gb & 1Gb & 1Gb & 1Gb & 1Gb & 1Gb & 1Gb & 1Gb & 1Gb & 1Gb & 1Gb & 1Gb & 1Gb & 1Gb \\
& B & 14Gb & 14Gb & 14Gb & 14Gb & 14Gb & 14Gb & 14Gb & 14Gb & 14Gb & 14Gb & 14Gb & 14Gb & 14Gb & 14Gb & 14Gb & 13Gb & 12Gb \\ \hline
$\uparrow q$

    \end{tabular}
  \end{adjustbox}
  \caption{Total memory peak in Megabytes, or in Gygabytes when indicated
    by Gb.}
  \label{tab:space}
\end{table}

The results show that our prototypes are very efficient. In terms of time
both V and T2 obtain similar results, see Table~\ref{tab:time}. As expected
T2 is slightly slower than V, but in practice this different is less than a
factor of $2$. The time performance of B is also very similar, in fact V
and T2 are faster, which was not expected as B has $O(1)$ performance per
operation and V and T2 have $O(\alpha(n))$. Even though in practice this
difference was expected to be very small we were not expecting to obtain
faster performance. This is possibly a consequence of the memory hierarchy
as B works by keeping $A$ and all the queries in memory.

Concerning memory our prototypes also obtained very good performance, see
Table~\ref{tab:space}. In particular we can clearly show a significant
difference between using $O(q)$ and $O(\ell)$ extra performance. Consider
for example $q = 2^{26}$ and $\ell = 2^{16}$. For these values V uses more
than one gigabyte of memory, whereas T2 requires only 17Mb, a very large
difference. In general T2 uses less memory than V, except when $q$ and
$\ell$ become similar. For example when $q = \ell = 2^{26}$ V use around
one gigabyte of memory, whereas T2 requires three, but this is expected. Up
to a given fixed factor. The baseline B requires much more memory as it
stores more items in memory. Namely a compacted version of the array $A$
and the solutions to all of the queries. Our prototypes V and T2 do not
store query solutions. Instead whenever a query is computed its value is
written to a \texttt{volatile} variable. This guarantees that all the
necessary computation is performed, instead of optimized away by the
compiler. However it also means that previous solutions are overwritten by
newer results. We deemed this solution as adequate for an online algorithm,
which in practice will most likely pass its results to a calling
process. Moreover storing the query solutions would bound the
experimental results to $\Omega(q)$ space, thus not being a fair test of
$O(\ell)$ space.
\section{Related Work}
\label{sec:related-work}
The Range Minimum Query problem has been exhaustively studied. This problem
was shown to be linearly equivalent to the Lowest Common Ancestor problem
in a static tree by~\citet*{Gabow_1984}. A recent perspective on this
result was given by~\citet*{Bender_2000}. The first major solution to the
LCA problem, by~\citet*{Berkman_1993}, obtained $O(\alpha(n))$ time, using
Union-Find data structures. Similarly to our data structure. In fact this
initial result was a fundamental inspiration for the data structure we
propose in this paper. A constant time solution was proposed
by~\citet*{Harel_1984}. A simplified algorithm was proposed
by~\citet*{Schieber_1988}. A simplified exposition of these algorithms, and
linear equivalence reductions, was given by~\citet*{Bender_2000}.

Even though these algorithms were simpler to understand and implement they
still required $O(n)$ space to store auxiliary data structures, such as
Cartesian trees. Moreover the constants associated with these data
structures were large, limiting the practical application of these
algorithms. To improve this limitation direct optimal direct algorithms for
RMQ were proposed by~\citet*{Fischer_2006}. The authors also showed that
their proposal improved previous results by a factor of two. However they
also observed that for several common problem sizes, asymptotically slower
variants obtained better performance. Hence a practical approach, that
obtained a 5 time speedup, was proposed by~\citet*{Ilie_2010}. Their approach
was geared towards the Longest Common Extension on strings and leveraged
the use its average value to.

A line of research directed by an approach that focused on reducing
constants by using succinct and compressed representations was initiated
by~\citet*{Sadakane_2007B} and successively improved
by~\citet*{Sadakane_2007A},~\citet*{Sadakane_2010} and~\citet*{Fischer_2011}. The
last authors provide a systematic comparison of the different results up
to~\citeyear{Fischer_2011}. Their solution provided an $2n+o(n)$ bits data
structure the answers queries in $O(1)$ time.

Still several engineering techniques can be used obtain more practical
efficient solutions. An initial technique was proposed
by~\citet*{Grossi_2013}. A simplification implemented by~\citet*{Ferrada_2017}
used $2.1n$ bits and answered queries in $1$ to $3$ microseconds per
query. Another proposal by~\citet*{baumstark_et_al:LIPIcs:2017:7615} obtained
around a 1 microsecond per query (timings vary depending on query
parameters) on an single core of the Intel Xeon E5-4640 CPU.

A new approach was proposed by~\citet*{Alzamel_2018} where no index data
structure was created by a preprocessing step. Instead all the RMQs are
batched together and solved in $n + O(q)$ time and $O(q)$ space. This space
was used to store a contracted version of the input array $A$ and the
solutions to the queries. This is essentially the approach we follow in
this paper. Therefore in Table~\ref{tab:time} we independently verify their
query times in the nanoseconds. Also table~\ref{tab:space} reports the
memory requirements of their structure. In a recent
result~\citet*{10.1002/spe.2597} proposed an heuristic
idea, without constant worst case time and a hybrid variation with $O(1)$
time and $3n$ bits. Their best result obtains competitive results against
existing solutions, except possibly for small queries. Their results show
query times essentially equal to ours and the algorithm
of~\citet*{Alzamel_2018} for large queries, but they also obtain 10 times
slower performance for small queries.

For completion we also include references to the data structures we used,
or mentioned, in our approach.

The technique by~\citet*{Briggs_1993} provides a way to use memory without
the need to initialize it. Moreover each time a given memory position needs
to be used for the first time it requires only $O(1)$ time to register this
change. The trade-off with this data structure is that it triples the space
requirements. Since, for now, we do not have an implementation of
Lemma~\ref{lemma:transfer}, the claimed result can use this technique, also
explained by~\citet*{bentley2016programming} and~\citet*{aho1974design}. For
our particular implementation this can be overcome. For the destination
structure is not a problem because we can assume that the whole merge
process includes the time for the initial clean-up, all within
$(a_1+a_2)/2$ as explained in Lemma~\ref{lemma:transfer}. Only the active
structure requires some more forethought. In essence when the merge
processes starts and we start using an active structure that supports
$(a_1+a_2)/2$ elements it is a good time to start cleaning a piece of
memory that supports $(a_1+a_2+c_1+c_2)/2$ elements, as this will be the
number of elements of the future active structure. We will start using this
structure when the current merge finishes. Since this number of elements is
at most $a_1+a_2$ it is possible to finish the clean-up when at most
$(a_1+a_2)/2$ operations have executed, by cleaning two element positions
in each operation.

The Union-Find data structure is a fundamental piece of our solution. The
original proposal to represent disjoint sets that can support the
\texttt{Union} and \texttt{Find} operations was
by~\citet*{Galler_1964}. Their complexity was bounded by $O(\log ^{*}(n))$
amortized time per operation by~\citet*{Hopcroft_1973}. The analysis of the
time bound was later refined to $O(\alpha(n))$
by~\citet*{Tarjan_1984}. Lower bound analysis guarantees that these bounds
are optimal~\citet*{Tarjan_1979} and~\citet*{Fredman_1989}. However in the
case where the sequence of operations is known \textit{a priori} it is
possible to obtain $O(1)$ amortized time per operation, as shown
by~\citet*{Gabow_1985}. An exhaustive survey was given
by~\citet*{Galil_1991}. An elementary description of this data structure was
provided by~\citet*{cormen2009introduction}
and~\citet*{sedgewick2011algorithms}.

Hash tables date back to the origin of computers. A history on the subject
and the first theoretical analysis was given
by~\citet*{knuth1963notes}. This analysis established constant expect time
bound. The high probability bound of separate chaining can be derived from
balls and bins model, see~\citet*{mitzenmacher2017probability}. Actually a
better bound was obtained by~\citet*{Gonnet_1981}. The 2-way chaining
hash-table was proposed by~\citet*{Azar_1999}, which also established its
constant expected time and high probability bound.

Exponential searches where proposed by~\citet*{BENTLEY197682}
and~\citet*{Baeza_Yates_2010} and can be used to speed-up the binary search
algorithm when the desired element is close to the beginning or end of a
list. For an introduction to binary search
see~\citet{cormen2009introduction}.

The data structure by~\citet*{Emde_Boas_1976} provides support for
\texttt{Predecessor} queries over integers in $O(\log \log n)$ time, by
recursively dividing a tree along its medium height. For an elementary
description, which requires less space was given
by~\citet*{cormen2009introduction}. The y-fast trie data structure was
proposed by~\citet*{Willard_1983} to reduce the large space requirements of
the Van Emde Boas tree. This data structure obtains the $O(\log \log n)$
time bound, only that amortized. For this reason we did not considered it
in Theorem~\ref{teo:real}. Also in the process the this result describes
x-fast tries.

\section{Discussion and Conclusion}
\label{sec:disc-concl}
We can now discuss our results in context. In this paper we started by
defining a set of commands that can be used to form sequences. Although
these commands are fairly limited they can still be used for several
important applications. First notice that if we are given a list of $(i,j)$
RMQs we can reduce them to the classical context. This can be achieved with
two hash tables. In the first table store the queries indexed by $i$ and on
the second by $j$. We use the first table to issue \texttt{Mark} commands
and the second to issue \texttt{Query} commands. This requires some
overhead but it allows our approach to be used to solve classical RMQ
problems. In particular it will significantly increase the memory
requirements, as occurs in Table~\ref{tab:space} between T2 and B.

Our data structures can be used in online and real-time applications. Note
in particular we can use our commands to maintain the marked positions in a
sliding window fashion. Meaning that at any instant we can issue
\texttt{Query} commands for any of the previous $\ell$ positions. The
extremely small memory requirements of our approach makes our data
structure suitable to be used in routers, switches or in embedded
computation devices with low memory and CPU resources.

The simplest configuration of our data structure consists of a stack
combined with a Union-Find data structure. For this structure we can
formally prove that our procedures correctly compute the desired result,
Theorem~\ref{teo:correct}. We then focused on obtaining the data structure
configuration that yielded the best performance. We started by obtaining
$O(\alpha(n))$ amortized time and $O(q)$ space, see
Theorem~\ref{teo:basic}. This result is in theory slower than the result
by~\citet*{Alzamel_2018}, which obtained $O(1)$ amortized query time. We
compared experimentally these approaches in
Section~\ref{sec:experimental}. The results showed that out approach was
competitive, both in terms of time and space, our prototype V was actually
faster than the prototype B by~\citet{Alzamel_2018}. We also showed that it
was possible for our data structure to obtained $O(1)$ amortized query time
(Corolary~\ref{cor:const}), mostly for theoretical competitiveness. We did
not implement this solution.

We described how to reduce the space requirements down to $O(\ell)$, by
transferring information among structures and discarding structures that
became full, see Lemma~\ref{lemma:transfer}. In theory this obtained the
same $O(\alpha(n))$ amortized time but significantly reduced space
requirements. We also implemented this version of the data structure. In
practice the time penalty was less than a $2$ factor. Moreover, for some
configurations, the memory reduction was considerable, see
Table~\ref{tab:space}.

Lastly we focused on obtaining real time performance. We obtained a high
probability bound of $O(\log n)$ amortized time per query, see
Theorem~\ref{teo:high}. This bound guarantees real time performance. We
then investigated alternatives to reduce this time bound to
$O(\log \log n)$. We proposed two solutions. In one case we considered
approximate queries, thus reducing the necessary amount of active positions
to $O(\log n)$. In the other case we used the Van Emde Boas tree, which
provided a $O(\log \log n)$ high probability time bound for all commands
except \texttt{Value}, see Theorem~\ref{teo:real}. In this later
configuration the \texttt{Value} command actually obtained an
$O(\sqrt{\ell})$ bound, which is large, but the corresponding amortized
value is only $O(\log \log n)$.

\section{Acknowledgements}
\label{sec:acknowledgements}
The work reported in this article was supported by national funds through
Fundação para a Ciência e a Tecnologia (FCT) with reference
UIDB/50021/2020 and project NGPHYLO PTDC/CCI-BIO/29676/2017.
\bibliography{biblio}
\end{document}